\documentclass[a4paper, 10 pts]{article}
\usepackage{fullpage}
\usepackage{microtype}
\usepackage{amsmath, mathtools}
\usepackage{amssymb}
\usepackage{amsthm}
\usepackage{upgreek}
\usepackage[normalem]{ulem}
\usepackage{tikz}
\usetikzlibrary{decorations.pathreplacing}

\usepackage{microtype}
\usepackage{authblk}
\usepackage[colorinlistoftodos]{todonotes}
\usepackage{enumitem}
\usepackage{appendix}
\usepackage{graphicx}
\DeclareGraphicsExtensions{.pdf}
\usepackage{algorithm}
\usepackage[noend]{algpseudocode}

\newtheorem{theorem}{Theorem}
\newtheorem{assumption}[theorem]{Assumption}
\newtheorem{proposition}[theorem]{Proposition}
\newtheorem{claim}[theorem]{Claim}
\newtheorem{conjecture}{Conjecture}
\newtheorem{corollary}{Corollary}
\newtheorem{lemma}{Lemma}

\newcommand{\polylog}{\mathrm{polylog}}

\newcommand{\GPW}{\mathsf{GPW}}
\newcommand{\col}{\mathsf{col}}
\newcommand{\head}{\mathsf{head}}
\newcommand{\tail}{\mathsf{tail}}
\newcommand{\pred}{\mathsf{pred}}
\newcommand{\cQ}{\mathcal{Q}}
\newcommand{\eps}{\varepsilon}
\newcommand{\event}{\mathcal{E}}
\newcommand{\E}{\mathbb{E}}

\bibliographystyle{plainurl}

\title{The zero-error randomized query complexity of the pointer function}

\author{Jaikumar Radhakrishnan \hspace{1cm}
       Swagato Sanyal\\
    Tata Institute
  of Fundamental Research, Mumbai\\ \texttt{\{jaikumar, swagato.sanyal\}@tifr.res.in}}
 
\date{}

\begin{document}

\maketitle

\begin{abstract}
The pointer function of G{\"{o}}{\"{o}}s, Pitassi and Watson
\cite{DBLP:journals/eccc/GoosP015a} and its variants have recently
been used to prove separation results among various measures of
complexity such as deterministic, randomized and quantum query
complexities, exact and approximate polynomial degrees, etc. In
particular, the widest possible (quadratic) separations between
deterministic and zero-error randomized query complexity, as
well as between bounded-error and zero-error randomized query
complexity, have been obtained by considering {\em
  variants}~\cite{DBLP:journals/corr/AmbainisBBL15} of this
pointer function.

However, as was pointed out in
\cite{DBLP:journals/corr/AmbainisBBL15}, the precise zero-error
complexity of the original pointer function was not known.  We show a
lower bound of $\widetilde{\Omega}(n^{3/4})$ on the zero-error
randomized query complexity of the pointer function on $\Theta(n \log
n)$ bits; since an $\widetilde{O}(n^{3/4})$ upper bound is already
known \cite{DBLP:conf/fsttcs/MukhopadhyayS15}, our lower bound is
optimal up to a factor of $\polylog\, n$.
\end{abstract}






\section{Introduction}
\label{intro}

Understanding the relative power of various models of computation is a
central goal in complexity theory. In this paper, we focus on one of
the simplest models for computing boolean functions---the query model
or the decision tree model. In this model, the algorithm is required
to determine the value of a boolean function by querying individual
bits of the input, possibly adaptively. The computational resource we seek to
minimize is the number of queries for the worst-case input.  That is,
the algorithm is charged each time it queries an input bit, but not for its
internal computation.

There are several variants
of the query model, depending on whether \sout{or not} randomization
is allowed, and on whether error is acceptable. Let
$D(f)$ denote the deterministic query complexity of $f$, that is,
the maximum number of queries made by the algorithm for the worst-case
input; let $R(f)$ denote the maximum number of queries made by the
best randomized algorithm that errs with probability at most $1/3$
(say) on the worst-case input. Let $R_0(f)$ be the zero-error
randomized query complexity of $f$, that is, the expected number of
queries made for the worst-case input by the best randomized algorithm for $f$
that answers correctly on every input.

The relationships between these query complexity measures have been
extensively studied in the literature. That randomization can lead to
significant savings has been known for a long time.  Snir
\cite{DBLP:journals/tcs/Snir85} showed a $O(n^{\log_43})$ randomized
linear query algorithm (a more powerful model than what we discussed)
for complete binary NAND tree function for which the deterministic
linear query complexity is $\Omega(n)$. Later on Saks and
Wigderson~\cite{DBLP:conf/focs/SaksW86} determined the zero-error
randomized query complexity of the complete binary NAND tree function
to be $\Theta(n^{0.7536\dots})$. They also presented a result of Ravi
Boppana which states that the uniform rooted ternary majority tree function
has randomized zero-error query complexity $O(n^{0.893\dots})$ and
deterministic
query complexity $n$. All these example showed that randomized query
complexity can be substantially lower than its deterministic
counterpart. On the other hand, Nisan showed that the $R(f) =
\Omega(D(f)^{1/3})$ \cite{DBLP:journals/siamcomp/Nisan91}.  Blum and
Impagliazzo \cite{DBLP:conf/focs/BlumI87}, Tardos
\cite{DBLP:journals/combinatorica/Tardos89}, Hartmanis and Hemachandra
\cite{DBLP:conf/coco/HartmanisH87} independently showed that
$R_0(f)=\Omega(D(f)^{1/2})$. Thus, the question of the largest
separation between deterministic and randomized complexity remained
open. Indeed, Saks and Wigderson conjectured that the complete binary
NAND tree function exhibits the widest separation possible between
these two measures of complexity.
\begin{conjecture}[\cite{DBLP:conf/focs/SaksW86}]
\label{CONJ:SW}
For any boolean function $f$ on $n$ variables, $R_0(f) =
\Omega(D(f)^{0.753\dots})$.
\end{conjecture}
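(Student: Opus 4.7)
The plan is to extend the Saks--Wigderson recursive analysis of the binary NAND tree to arbitrary boolean functions, by producing for each $f$ a hard distribution under which any zero-error decision tree must pay the expected $D(f)^c$ cost, where $c = \log_2\!\bigl((1+\sqrt{33})/4\bigr) \approx 0.7536$.

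First, I would invoke Yao's minimax principle: to prove $R_0(f) = \Omega(D(f)^c)$ it suffices to construct a distribution $\mu_f$ on inputs so that every deterministic decision tree that is correct on the support of $\mu_f$ has expected depth $\Omega(D(f)^c)$ under $\mu_f$. The construction of $\mu_f$ will be recursive in $D(f)$, mirroring the recursive nature of the NAND tree analysis.

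Second, the recursive step. Given $f$ with deterministic complexity $D = D(f)$, I would try to locate a variable $x_i$ and a partial assignment $\rho$ to a bounded number of other variables such that both subfunctions $f_0 = (f|_\rho)|_{x_i=0}$ and $f_1 = (f|_\rho)|_{x_i=1}$ retain deterministic complexity at least $D - O(1)$. Applying the construction inductively to $f_0$ and $f_1$ yields distributions $\mu_0,\mu_1$; the mixture $\mu_f = \tfrac{1}{2}\mu_0 + \tfrac{1}{2}\mu_1$, extended so as to pin the fixed coordinates, is the hard distribution for $f$. An adversary/entropy argument that tracks how many queries a zero-error tree must make to identify which branch of the mixture the input came from should then yield a recurrence of the form $T(D) \geq T(D-1) + p \cdot T(D-2)$ whose leading root is exactly $c$, provided the weight $p$ is tuned to match the Saks--Wigderson recurrence for the NAND tree.

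Third, the main obstacle. The existence of such a balanced splitter is built into the NAND tree but is highly nontrivial in general; for asymmetric functions (e.g.\ AND of ORs, or functions with ``junta-like'' concentrations) no single variable cuts both restrictions to nearly full depth. I would address this by amortizing across many queries: whenever the recursion is forced to be unbalanced, the cheap side should contribute a large certificate cost that the zero-error algorithm is nevertheless obliged to pay. Making this accounting tight enough to recover the exact exponent $c$, rather than the $1/2$ already attainable from the Blum--Impagliazzo/Tardos/Hartmanis--Hemachandra certificate bound, appears to require a genuinely new combinatorial invariant of $f$ interpolating between deterministic depth and zero-error randomized depth. This gap between the balanced and unbalanced regimes is, in my view, the principal reason the conjecture has resisted proof, and it is where the bulk of the effort in the plan would be concentrated.
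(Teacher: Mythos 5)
This statement is labeled a \emph{conjecture} in the paper, not a theorem, and the paper does not prove it --- it proves quite the opposite. The Saks--Wigderson conjecture is known to be \emph{false}: as the paper itself explains, it was refuted independently by Ambainis~\emph{et al.}~\cite{DBLP:journals/corr/AmbainisBBL15} and by Mukhopadhyay and Sanyal~\cite{DBLP:conf/fsttcs/MukhopadhyayS15} using the G{\"o}{\"o}s--Pitassi--Watson pointer function and its variants. Concretely, $R_0(\GPW^{s\times s}) = \widetilde{O}(s^{1.5})$ while $D(\GPW^{s\times s}) = \Omega(s^2)$, so $R_0(\GPW^{s\times s}) = \widetilde{O}\bigl(D(\GPW^{s\times s})^{3/4}\bigr)$, and $3/4 < 0.7536\dots$; the variants of Ambainis~\emph{et al.} push this all the way to a near-quadratic gap, $R_0(f) = \widetilde{O}(D(f)^{1/2+o(1)})$. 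The entire contribution of the present paper is to pin down the zero-error complexity of the pointer function that \emph{breaks} the conjecture, not to establish the conjecture.

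Consequently, no proof strategy can succeed, and it is worth seeing where yours would break down. The recursive splitting step is exactly the place: you posit that one can always find a variable $x_i$ and a restriction $\rho$ so that \emph{both} subfunctions retain deterministic complexity $D - O(1)$, feeding a recurrence $T(D) \geq T(D-1) + p\cdot T(D-2)$. For the NAND tree this is built in by the self-similar structure, but the pointer function is precisely a function where this fails catastrophically: querying a few well-chosen cells can localize a certificate and collapse the remaining deterministic complexity, so the two branches of the split are wildly unbalanced. Your own third paragraph already identifies this as ``the principal reason the conjecture has resisted proof,'' but the actual situation is sharper --- the gap is not merely a missing lemma, it is a counterexample. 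The amortization you hope for (charging the cheap branch a large certificate cost) cannot be made to work because $\GPW^{r\times s}$ has certificate complexity only $\widetilde{O}(r+s)$, far below the $\widetilde{\Theta}(\sqrt{r}s)$ that would be needed. You should treat this statement as a historical conjecture that the subsequent literature disproved, not as something to be proved.
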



This conjecture was recently refuted independently by Ambainis {\em et
  al.}~\cite{DBLP:journals/corr/AmbainisBBL15} and Mukhopadhyay and
Sanyal~\cite{DBLP:conf/fsttcs/MukhopadhyayS15}. Both works based their
result on the pointer function introduced by G{\"{o}}{\"{o}}s, Pitassi
and Watson \cite{DBLP:journals/eccc/GoosP015a}, who used this function
to show a separation between deterministic decision tree complexity
and unambiguous non-deterministic decision tree complexity. In
Section~\ref{gpw}, we present the formal definition of the function
$\GPW^{r \times s}$, which is a Boolean function on
$\widetilde{\Theta}(rs)$ bits.

Mukhopadhyay and Sanyal~\cite{DBLP:conf/fsttcs/MukhopadhyayS15} used
$\GPW^{s \times s}$ to obtain the following refutation of
Conjecture~\ref{CONJ:SW}: $R_0(\GPW^{s \times s})=
\widetilde{O}(s^{1.5})$ while $D(\GPW^{s \times s}) =
\Omega(s^2)$. While this shows that $\GPW^{s \times s}$ witnesses a
wider separation between deterministic and zero-error randomized query
complexities than conjectured, the separation shown is not the widest
possible for a Boolean function.  Independently, Ambainis {\em et al.}
modified $\GPW^{s \times s}$ in subtle ways, to establish the widest
possible (near-quadratic) separation between deterministic and
zero-error randomized query complexity, and between zero-error
randomized and bounded-error randomized query complexities.

Ambainis et al. \cite{DBLP:journals/corr/AmbainisBBL15} pointed out, however, that the precise zero-error
randomized query complexity (i.e. $R_0(\GPW^{s \times s})$)
was not known.  One could ask if the optimal separation demonstrated by Ambainis {\em
  et al.} is also witnessed by $\GPW^{s \times s}$ itself. In this
work, we prove a near-optimal lower bound on the zero-error randomized
query complexity of $\GPW^{r \times s}$, which is slightly
more general than the $\GPW^{s \times s}$ considered
in earlier works.
\begin{theorem}[Main theorem]
\label{mainthm}
$R_0(\GPW^{r \times s}) = \widetilde{\Omega}(r+\sqrt{r}s)$.
\end{theorem}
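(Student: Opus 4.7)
The plan is to invoke Yao's minimax principle, reducing the theorem to exhibiting a distribution $\mu$ over inputs on which every zero-error deterministic decision tree has expected query cost $\widetilde{\Omega}(r+\sqrt{r}\,s)$. I would support $\mu$ on $1$-inputs: the trail of pointers is a uniformly random Hamilton path over the $s$ columns; the trail cell in each column is chosen independently and uniformly from the $r$ rows; additionally, each column carries $\Theta(\sqrt{r})$ decoy cells labelled $1$, whose outgoing pointers are chosen so that short local inspection cannot distinguish a decoy from the genuine trail cell. Labels and pointers in the remaining cells are set to look random and consistent with a $1$-input.

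Because a zero-error algorithm must output $1$ with certainty on every $1$-input, its query transcript must be a complete $1$-certificate. For $\GPW^{r \times s}$ this forces the algorithm to expose every trail cell: it has to pin down, in each column, the unique $1$-cell that actually sits on the trail, confirm its outgoing pointer, and rule out completions of the unqueried cells that would instead certify a $0$-input. The easy part is the $\Omega(r)$ contribution, which comes from the cost of locating the first trail cell with no prior structural information. The substantive part is the $\Omega(\sqrt{r}\,s)$ term: separating the genuine trail cell from $\Theta(\sqrt{r})$ decoys should cost $\Omega(\sqrt{r})$ queries in expectation per column, and there are $s$ columns.

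To execute this core step I would define a potential function measuring the posterior uncertainty about the trail given the algorithm's transcript, and argue that each query decreases it by at most a $\polylog n$ amount, so that a correct algorithm can halt only after $\widetilde{\Omega}(\sqrt{r}\,s)$ queries in expectation. The main obstacle will be ruling out \emph{adaptive sharing} of work across columns: nothing a priori prevents an algorithm from using a pointer discovered in column $i$ to sharply focus its search in column $j$. Overcoming this should rely on the column-wise independence of the trail positions under $\mu$ together with a chain-rule or direct-sum decomposition of expected cost — attributing queries to the column they touch and analyzing each column conditional on the algorithm's state upon first entering it, possibly combined with a coupling or random-restriction step to keep the conditional distributions tractable. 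I expect the formalization of this decomposition, and the attendant handling of the correlations introduced by queried pointers, to be where the bulk of the technical work lies.
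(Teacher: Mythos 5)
Your plan hinges on a distribution supported on $1$-inputs, and this is the wrong side of the function. The paper explicitly chooses a distribution almost entirely supported on $0$-inputs and remarks that ``a randomized algorithm can quickly find a certificate for inputs $X$ if $\GPW^{r\times s}(X)=1$'' (cf.\ Theorem~\ref{ms2}, which gives an $\widetilde{O}(r+s)$ one-sided algorithm that never errs on $1$-inputs). The hard case for zero-error algorithms is certifying a $0$-input, not a $1$-input, and a $1$-input distribution cannot capture that difficulty.

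There is also a structural misunderstanding that kills the per-column accounting you propose. In a $1$-input of $\GPW$, the chain $p_0,\ldots,p_{s-1}$ has $b_{p_k}=0$ for all $k\ge 1$; the trail cells outside the all-ones column $j^*$ are \emph{zero}-cells, not $1$-cells, so ``decoy cells labelled $1$'' in those columns are not competing with the genuine trail cell at all. More importantly, the chain is deterministic once any entry point is known: from $p_k$ the algorithm reads $\ptr_{p_k}$ and jumps directly to $p_{k+1}$. There is no search within each subsequent column, so the hoped-for $\Omega(\sqrt r)$ cost \emph{per column} simply does not accrue; a handful of lucky samples plus pointer-chasing covers a long suffix of the chain cheaply. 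Your ``potential decreases by $\polylog$ per query'' sketch would have to fight against this structure and I don't see how it survives.

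The paper's route is different and substantively so. It plants a random $0$-input (random pointer chains inside $W$, sparse random pointers out of $V$), reduces via Markov to a deterministic algorithm $\cQ$ with $\widetilde{O}(\sqrt r\,s)$ probes, and then proves a \emph{stitching lemma}: with high probability the unprobed cells can be rewired into a valid $1$-input $X'$ agreeing with $X$ on every probed location, so $\cQ$ cannot safely answer $0$. The engine behind stitching is a \emph{matching lemma}, a Hall-type expansion statement for a bipartite graph between the unread columns of $V$ and the bands of $W$, proved by Chernoff/union-bound estimates on where the random $V$-pointers land. None of this machinery appears in your proposal, and the direction of the planted-input argument (hide a $1$-input inside a $0$-input, rather than make a $1$-input hard to certify) is the essential idea you are missing.
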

Such a result essentially
claims that randomized algorithms cannot efficiently locate
certificates for the function. This would be true, for example, if the
function could be shown to require large certificates, since the
certificate complexity of a function is clearly a lower bound on its
zero-error randomized complexity. This straightforward approach does
not yield our lower bound, as the certificate complexity of $\GPW^{r
  \times s}$ is $\widetilde{O}(r+s)$. In our proof, we set up a
special distribution on inputs, and by analyzing the expansion
properties of the pointers, show that a certificate will evade a
randomized algorithm that makes only a small number of queries.  In fact, the distribution we
devise is almost entirely supported on
inputs $X$ for which $\GPW^{r \times s}(X)=0$. This is not an accident:
a randomized algorithm can quickly find a certificate for inputs
$X$ if $\GPW^{r \times s}(X)=1$ (see Theorem~\ref{ms2} below).

It
follows from Theorem~\ref{mainthm} that the algorithm of Mukhopadhyay
and Sanyal \cite{DBLP:conf/fsttcs/MukhopadhyayS15} is optimal up to polylog factors.
\begin{corollary}
$R_0(\GPW^{s \times s}) = \widetilde{\Omega}(s^{1.5})$.
\end{corollary}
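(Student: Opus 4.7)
The corollary is an immediate specialization of Theorem~\ref{mainthm}: setting $r=s$ gives $R_0(\GPW^{s\times s}) = \widetilde{\Omega}(s + \sqrt{s}\cdot s) = \widetilde{\Omega}(s^{3/2})$, which together with the Mukhopadhyay--Sanyal $\widetilde{O}(s^{3/2})$ upper bound pins $R_0(\GPW^{s\times s})$ down up to polylogarithmic factors. So the real content is in Theorem~\ref{mainthm}, and what follows is my plan for proving that.

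The natural approach is Yao's minimax: exhibit a distribution $\mu$ on inputs such that every deterministic zero-error decision tree correct on $\mathrm{supp}(\mu)$ must make $\widetilde{\Omega}(r+\sqrt{r}s)$ queries in expectation under $\mu$. The paper's preview tells us two things about the right $\mu$. First, it should be essentially supported on $0$-inputs of $\GPW^{r\times s}$, since by Theorem~\ref{ms2} a $1$-input has a short certificate that can be located quickly, and so $1$-inputs cannot drive a lower bound above the $\widetilde{O}(r+s)$ certificate complexity. Second, the bound $\widetilde{\Omega}(r+\sqrt{r}s)$ must come from the difficulty of \emph{locating} a certificate rather than its mere size: even though short $0$-certificates exist in principle, they must be hidden from an algorithm making few queries. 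Concretely, I would sample the color and pointer entries of each cell independently from carefully chosen marginals, and argue that for any adaptive deterministic strategy that has made fewer than $\widetilde{\Omega}(r+\sqrt{r}s)$ queries, with non-negligible $\mu$-probability the unrevealed cells still admit a completion yielding $\GPW^{r\times s}=1$; such a completion forces the algorithm to err, contradicting zero-error correctness.

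The heart of the proof, and what I expect to be the main obstacle, is establishing this extendability property uniformly over all adaptive query strategies. The $\sqrt{r}s$ term has a birthday-like flavor: it should correspond to the number of queries needed before the revealed pointer graph closes off enough chains that no length-$\Theta(r)$ pointer path consistent with a $1$-input remains. Controlling the expansion of pointers reached by an adaptive adversary---that is, ruling out that a lucky small set $S$ of queried cells already certifies the $0$ value---requires a delicate combinatorial argument: one has to bound, over the random pointer assignment, the probability that a $\head$-to-$\tail$ chain of the requisite structure is entirely pinned down by $S$, and then union-bound over the possible candidate chains and over the adaptive exploration of $S$. Balancing these two sources of loss so as to recover the target $r+\sqrt{r}s$ threshold is where I expect most of the technical effort to go.
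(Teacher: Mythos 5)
Your derivation of the corollary is exactly the paper's: set $r=s$ in Theorem~\ref{mainthm} to get $\widetilde{\Omega}(s+\sqrt{s}\cdot s)=\widetilde{\Omega}(s^{3/2})$, matching the Mukhopadhyay--Sanyal upper bound. Your accompanying sketch of how Theorem~\ref{mainthm} itself should be proved (Yao's principle with a hard $0$-heavy distribution, showing unqueried cells can be stitched into a $1$-input, with the $\sqrt{r}s$ term coming from hiding predecessor locations in random pointer chains) also tracks the paper's actual strategy.
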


In addition to nearly determining the zero-error complexity of the 
original $\GPW^{s\times s}$ function, our result has two interesting 
consequences.
\begin{enumerate}
 \item[(a)] The above mentioned result of Mukhopadhyay and
   Sanyal~\cite{DBLP:conf/fsttcs/MukhopadhyayS15} showed that
   $R_0(\GPW^{s \times s})=\widetilde{\Omega}(D(\GPW^{s \times
     s})^{0.75})$. Our main theorem shows that $\GPW^{s \times s}$
   cannot be used to show a significantly better separation between
   the deterministic and randomized zero-error complexities (ignoring
   $\polylog$ factors). However, the function $\GPW^{s^2
     \times s}$ allows us to derive a better separation\footnote{In
     \cite{WEB:SAaranson}, a similar separation between $R(\GPW^{s^2
       \times s})$ and $D(\GPW^{s^2 \times s})$ is mentioned.}:
   $R_0(\GPW^{s^2 \times s}) = O(D(\GPW^{s^2 \times s})^{2/3})$.  Our
   main theorem shows that this is essentially the best separation
   that can be derived from $\GPW^{r \times s}$ by varying $r$
   relative to $s$, so this method cannot match the near-quadratic
   separation between these measures shown by Ambainis {\em et
     al.}~\cite{DBLP:journals/corr/AmbainisBBL15} by considering a
   variant of the $\GPW^{s \times s}$ function.

\item[(b)]  $\GPW^{s \times s}$ exposes a non-trivial polynomial
  separation between the zero-error and bounded-error
  randomized query complexities: $R(\GPW^{s \times
    s})=\widetilde{O}(R_0(\GPW^{s \times s})^{2/3})$. This falls
    short of the near-quadratic separation shown by
    Ambainis {\em et al.}~\cite{DBLP:journals/corr/AmbainisBBL15},
    but note that before that result no
    separation between these measures was known.
\end{enumerate}

\section{The $\GPW$ function}
\label{gpw}

\newcommand{\ptr}{\mathsf{ptr}} \newcommand{\cA}{\mathcal{A}}
\newcommand{\hc}{\hat{c}} The input $X$ to the \emph{pointer
  function}, $\GPW^{r\times s}$, is arranged in an array with $r$ rows
and $s$ columns. The cell $X[i,j]$ of the array contains two pieces of
data, a bit $b_{ij} \in \{0,1\}$ and a pointer $\ptr_{ij} \in
([r]\times [s])\cup \{ \bot \}$.

Let
$\cA$ denote the set of all such arrays. The function $\GPW^{r \times
  s}: \cA \rightarrow \{0,1\}$ is defined as follows: $\GPW^{r \times
  s}(X)=1$ if and only if the following three conditions are
satisfied.
\begin{enumerate}
\item There is a unique column $j^*$ such that for all rows $i \in
  [r]$, we have $b_{ij^*}=1$.

\item In this column $j^{*}$, there is a unique row $i^*$ such that
  $\ptr_{i^*j^*} \neq \bot$.

\item Now, consider the sequence of locations $(p_k: k=0, 1, \ldots,
  s-1)$, defined as follows: let $p_0 = (i^*, j^*)$, and for $k=0, 1,
  \ldots, s-2$, let $p_{k+1} = \ptr_{p_k}$. Then, $p_0, p_1, \ldots,
  p_{s-1}$ lie in distinct columns of $X$, and $b_{p_k} = 0$ for
  $k=1,2,\ldots, s-1$. In other words, there is a {\em chain of
    pointers}, which starts from the unique location in column $j^*$
  with a non-null pointer, visits all other columns in exactly $s-1$
  steps, and finds a $0$ in each location it visits (except the
  first).
\end{enumerate}
Note that $\GPW^{r \times s}$ can be thought of as a Boolean function on $\Theta(rs \log rs)$ bits.

\paragraph{Upper Bound}

The pointer function $\GPW^{r \times s}$, as defined above, is
parameterized by two parameters, $r$ and $s$.  G{\"{o}}{\"{o}}s,
Pitassi and Watson~\cite{DBLP:journals/eccc/GoosP015a} focus on the
special case where $r=s$.  Mukhopadhyay and
Sanyal~\cite{DBLP:conf/fsttcs/MukhopadhyayS15} also state their
zero-error randomized algorithm with $\widetilde{O}(s^{1.5})$ queries
for this special case; however, it is straightforward to extend their
algorithm so that it applies to the function $\GPW^{r \times s}$.
\begin{theorem}
\label{ms1}
$R_0(\GPW^{r \times s}) = \widetilde{O}(r+\sqrt{r}s)$.
\end{theorem}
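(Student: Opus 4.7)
The plan is to extend the zero-error Las Vegas algorithm of Mukhopadhyay and Sanyal~\cite{DBLP:conf/fsttcs/MukhopadhyayS15}, which attains $\widetilde{O}(s^{1.5})$ on $\GPW^{s \times s}$, to rectangular inputs. I would organize the algorithm into three phases: row sampling, forward chain traversal, and candidate verification, using parameter choices that scale with $r$ rather than $s$.

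Phase~1 samples $\Theta(\sqrt{r}\log s)$ rows uniformly at random and queries every cell (bits and pointers) in each, at a total cost of $\widetilde{O}(s\sqrt{r})$. On a 1-input, each of the $s-1$ chain cells outside column $j^{*}$ sits in some fixed row, so it lies in the sampled set with probability $\widetilde{\Theta}(1/\sqrt{r})$; thus $\widetilde{\Theta}(s/\sqrt{r})$ chain cells become visible along with their forward pointers in expectation. Phase~2 starts from each such revealed chain cell and follows its pointer chain forward. Since all revealed cells lie on one directed path of length $s$, the traversals merge into a single chain suffix $p_{k^{*}}, p_{k^{*}+1},\ldots,p_{s-1}$, where $k^{*} = \widetilde{O}(\sqrt{r})$ is the smallest chain index revealed in Phase~1. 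This traversal costs at most $s$ additional queries and certifies $s - \widetilde{O}(\sqrt{r})$ columns as not being $j^{*}$.

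Phase~3 identifies $j^{*}$ among the $\widetilde{O}(\sqrt{r})$ surviving candidates and verifies the output. If a single candidate survives, I would scan its column in full ($r$ queries) and then confirm the chain ($O(s)$ queries). If several candidates survive, I would interleave further row sampling with partial column scans to whittle the set, exploiting the structural fact that on a 1-input every non-$j^{*}$ candidate column contains exactly one 0-cell (the chain cell in that column). For 0-inputs, the procedure halts even sooner: either Phase~1 immediately exhibits a column with no 1s in the sampled rows, or forward traversal in Phase~2 detects a broken chain (wrong bit, missing pointer, revisited column), or the column scan in Phase~3 uncovers a 0 in the purported $j^{*}$ --- in each case a $0$-certificate of size $\widetilde{O}(r+s)$ is produced.

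The main obstacle I anticipate is bounding Phase~3 tightly. A naive verification of $\widetilde{O}(\sqrt{r})$ candidates at cost $r$ each yields $\widetilde{O}(r^{1.5})$, which exceeds the $\sqrt{r}\,s$ budget whenever $r > s$. Avoiding this blow-up requires an amortized argument: partial column scans of surviving candidates collectively reveal enough 0-cells to shrink the candidate pool geometrically in expectation, so that the expected number of full column scans is $O(1)$ and the total cost stays $\widetilde{O}(r + \sqrt{r}\,s)$. Formalising this amortization --- and in particular showing that the chain-position constraints on the missing prefix $p_{0},\ldots,p_{k^{*}-1}$ force the $\widetilde{O}(\sqrt{r})$ candidates to share enough ``hidden'' 0-cells to be eliminated in bulk --- is the step I expect to require the most care.
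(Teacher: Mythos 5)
Your proposal follows the same high-level plan the paper attributes to Mukhopadhyay and Sanyal: sample rows, follow pointer chains forward, then resolve the remaining candidate columns. The paper itself gives no proof of Theorem~\ref{ms1} (it simply cites \cite{DBLP:conf/fsttcs/MukhopadhyayS15} and asserts the extension from $s\times s$ to $r\times s$ is ``straightforward''), so your Phases~1 and~2, with $\Theta(\sqrt{r}\log s)$ sampled rows and forward chain traversal, are the right shape and the right parameter scaling.

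However, Phase~3 contains a genuine gap. First, the structural fact you invoke --- that on a 1-input every non-$j^*$ candidate column contains \emph{exactly} one $0$-cell --- is false. The definition of $\GPW^{r\times s}$ constrains only the bits in column $j^*$ and along the chain; cells of a non-$j^*$ column that are not on the chain may carry arbitrary bits, so such a column has \emph{at least} one $0$ but possibly many. Second, and more seriously, the amortization you propose is not justified, and the worst case for it is exactly the ``one-zero'' column. If a surviving candidate column has only its chain cell as a $0$, a partial scan of $k$ random cells finds that $0$ with probability only about $k/r$, so interleaved partial scans do \emph{not} shrink the candidate pool geometrically; getting the expected cost of Phase~3 under control in this case is the entire difficulty. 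Your own estimate $\widetilde{O}(r^{1.5})$ for naive verification exceeds the target $\widetilde{O}(r+\sqrt{r}s)$ for every $r>s$, and in particular for $r=s^2$, which is precisely the regime the paper relies on for its consequence~(a) about $\GPW^{s^2\times s}$. The appeal to ``chain-position constraints forcing shared hidden $0$-cells'' is left as a hope rather than an argument; without a concrete mechanism (for instance, one that locates the chain cell in a candidate column by tracing the chain backward via known successors rather than by scanning for $0$s), the proof is incomplete in the regime $r>s$.
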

Mukhopadhyay and Sanyal also gave a one-sided error randomized query
algorithm that makes $\widetilde{O}(s)$ queries on average but never
errs on inputs $X$, where $\GPW^{s \times s}(X)=1$. Again a straightforward
extension yields the following.
\begin{theorem}
\label{ms2}
There is a randomized query algorithm that makes $\widetilde{O}(r+s)$ queries
on each input, computes $\GPW^{r \times s}$ on each input with probability at
least $1/3$, and in addition never errs on inputs $X$ where $\GPW^{r \times s}(X)=1$.
\end{theorem}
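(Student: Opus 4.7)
The plan is to adapt the one-sided error algorithm of Mukhopadhyay and Sanyal~\cite{DBLP:conf/fsttcs/MukhopadhyayS15} from the square case $\GPW^{s\times s}$ to the rectangular case $\GPW^{r\times s}$. The natural target is a $1$-certificate, which consists of one all-$1$ column of height $r$ together with an $s$-vertex pointer chain, for a total size of $O(r+s)$. The algorithm searches for such a certificate and outputs $1$ if and only if it assembles one; this output rule automatically yields the ``never errs on YES'' guarantee, since on any NO input no $1$-certificate can be produced.

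I organize the procedure in three phases. First, sample $\Theta(\log rs)$ rows uniformly at random and query all their bits, at cost $\widetilde{O}(s)$; let $C\subseteq [s]$ denote the columns that show bit $1$ in every sampled row (necessarily $j^*\in C$ on YES inputs). Second, adaptively query additional random cells in each $j\in C$, removing $j$ the moment a $0$ is observed, with queries interleaved across surviving candidates so that the total work in this phase is kept to $\widetilde{O}(r+s)$. Third, for each candidate $j$ that is ultimately verified all-$1$, query all $r$ pointers in column $j$, identify the unique non-null pointer, and traverse the resulting chain of length $s-1$, verifying the bit and column-distinctness at each hop---costing a further $\widetilde{O}(r+s)$.

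Correctness on YES inputs is immediate from the output rule. On a NO input, no $1$-certificate exists; either the candidate set $C$ is emptied during Phase~2 (certifying no all-$1$ column) or the chain check in Phase~3 exposes an inconsistency, and the algorithm outputs $0$. A case analysis of the failure modes---no all-$1$ column, more than one all-$1$ column, non-unique or missing non-null pointer in the unique all-$1$ column, or a broken chain---shows that each is caught with probability at least $1/3$ within the query budget. The main obstacle in lifting the argument to general $r,s$ lies in the complexity analysis of Phase~2: the per-round budget must be balanced against both $r$ (the cost of fully verifying a single candidate as all-$1$) and $s$ (the number of candidates), which accounts for the final bound taking the form $\widetilde{O}(r+s)$; the remainder of the analysis from~\cite{DBLP:conf/fsttcs/MukhopadhyayS15} goes through mutatis mutandis.
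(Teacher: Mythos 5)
The paper gives no proof of this theorem: it simply cites the algorithm of Mukhopadhyay and Sanyal~\cite{DBLP:conf/fsttcs/MukhopadhyayS15} and remarks that the extension from the square to the rectangular array is routine. Judging your sketch on its own terms, the decisive flaw is that the one-sided-error direction is argued backwards. You stipulate that the algorithm outputs $1$ if and only if it assembles a $1$-certificate, and justify ``never errs on YES'' by noting that NO inputs admit no $1$-certificate. That observation shows only that the algorithm never \emph{wrongly} outputs $1$, i.e., that it never errs on inputs with $\GPW^{r\times s}(X)=0$. The theorem requires the opposite guarantee: on every $X$ with $\GPW^{r\times s}(X)=1$, the algorithm must output $1$ with probability one. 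Under your output rule this forces the algorithm to locate a complete $1$-certificate on every YES input within the worst-case budget of $\widetilde{O}(r+s)$ queries, for every outcome of the coins; you do not argue this, and it is exactly the hard part. After the $\Theta(\log rs)$-row sample, up to $\Omega(s)$ ``decoy'' columns containing a single $0$ can survive into $C$, and spreading a total Phase~2 budget of $\widetilde{O}(r+s)$ round-robin across them does not guarantee that the genuine column $j^*$ is ever fully verified before the budget runs out --- whereupon the algorithm, by your rule, outputs $0$ and errs on a YES input.

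The confusion resurfaces in your closing paragraph, where you claim each NO failure mode is ``caught with probability at least $1/3$.'' Under your stated rule this is vacuous: no $1$-certificate exists on a NO input, so the algorithm would output $0$ there with probability $1$. The probability-$1/3$ requirement in fact falls on the NO side precisely because the algorithm must default to answering $1$ whenever its queries remain consistent with some YES input (so as to protect the YES side) --- the inverse of your rule. The design needs to be flipped: output $1$ by default, output $0$ only when the queried locations constitute a genuine $0$-certificate, and then argue that on every NO input such a $0$-certificate is discovered with probability at least $1/3$ within $\widetilde{O}(r+s)$ queries. That last step is the real content, and your three-phase search does not address it.
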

Theorem~\ref{mainthm}, thus, completely determines the deterministic
and all randomized query complexities of a more general function
$\GPW^{r \times s}$.
\begin{figure}
\begin{tikzpicture}[align=center]

\draw[red] (0,0) grid[step=1cm] (5,5);

\draw[teal] (2,3) -- (3,2);
\node at (2.5,2.25){$1$};
\draw[->, ultra thick] (2.75,2.75) to [out=0, in=+90](4.5,1);

\foreach \x in {1,2,3,4,5}
	\draw[teal] (0,\x) -- (\x,0);
\foreach \x in {1,2,3,4,5}
	\draw[teal] (\x,5) -- (5,\x);
\foreach \x in {1,4,5}
	\foreach \y in {2,3,4}
		\node at (\x-0.5,\y-0.75){$0$};
\foreach \x in {2,3,5}
	\foreach \y in {1,5}
		\node at (\x-0.5,\y-0.75){$0$};
\foreach \x in {1,4}
	\foreach \y in {1,5}
		\node at (\x-0.5,\y-0.75){$1$};
\node at (3-0.5,3-0.75){$1$};
\node at (2-0.5,4-0.75){$1$};
\node at (2-0.5,3-0.75){$1$};
\node at (3-0.5,4-0.75){$0$};
\node at (2-0.5, 2-0.75){$0$};
\node at (3-0.5,2-0.75){$1$};

\draw[->, ultra thick] (0.75, 4.75) to [out=0, in=+120] (2.2,4);
\draw[->, ultra thick] (2.75, 3.75) to  (3,3.75);
\draw[->, ultra thick] (3.75, 3.5) to [out=170, in=+90] (0.25,2);
\draw[->, ultra thick] (0.75, 1.75) to [out=0, in=-90] (1.5,2);
\draw[->, ultra thick] (1.65, 1.65) to [out=160, in=-90] (0.15,4);
\draw[->, ultra thick] (0.80,0.50) to [out=0, in=-90] (2.25,1);

\node at (0.75,2.75){$\bot$};
\node at (1.75,4.75){$\bot$};
\node at (2.75,4.75){$\bot$};
\node at (3.75,4.75){$\bot$};
\node at (4.75,4.75){$\bot$};
\node at (0.75,3.75){$\bot$};
\node at (1.75,3.75){$\bot$};
\node at (4.75,3.75){$\bot$};
\node at (1.75,2.75){$\bot$};
\node at (3.75,2.75){$\bot$};
\node at (4.75,2.75){$\bot$};
\node at (2.75,1.75){$\bot$};
\node at (3.75,1.75){$\bot$};
\node at (4.75,1.75){$\bot$};
\node at (1.75,0.75){$\bot$};
\node at (2.75,0.75){$\bot$};
\node at (3.75,0.75){$\bot$};
\node at (4.75,0.75){$\bot$};

\end{tikzpicture}

\caption{Input to $\GPW^{r \times s}$ for $r=5,s=5$.}\label{ip}
\end{figure}
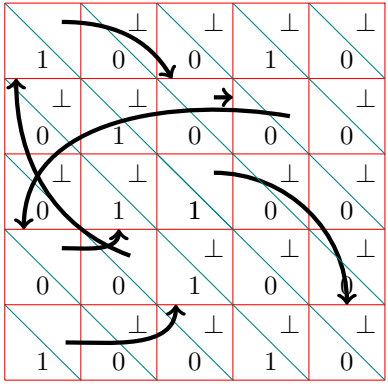

\subsection{The distribution} 
\label{sec:distr}
To show our lower bound, we will set up a distribution on inputs in
$\cA$. Let $V$ be the locations in the first $s/2$ columns, i.e., $V=
[r] \times [s/2]$; let $W$ be the locations in the last $s/2$ columns,
i.e., $W=[r] \times ([s] \setminus [s/2])$.  
In order to describe the random input $X$, we will need the following 
definitions.

\paragraph{Pointer chain:} For an input in $\cA$, we say that a sequence of
locations $\mathbf{p}=\langle \ell_0, \ell_1, \ell_2, \ldots, \ell_k \rangle$
is a pointer chain, if for $i=0,1, \ldots, k-1$, $\ptr_{\ell_i}=
\ell_{i+1}$; the location $\ell_0$ is the \emph{head} of the
$\mathbf{p}$ and is denoted by $\mathsf{head}(\mathbf{p})$; similarly,
$\ell_k$ is the \emph{tail} of $\mathbf{p}$ and is denoted by
$\mathsf{tail}(\mathbf{p})$.  Note that $\ptr(\ell_k)$ is not
specified as part of the definition of pointer chain $\mathbf{p}$; in
particular, it is allowed to be $\bot$.

\paragraph{Random pointer chain:} To build our random input $X$, we will
assign the pointer values of the various cells of $X$ randomly so that
they form appropriate pointer chains. For a set of locations $S$ we
build a {\em random pointer chain on $S$} as follows. First, we
uniformly pick a permutation of $S$, say $\langle \ell_0, \ell_2,
\ldots, \ell_k \rangle$. Then, we set $\ptr_{\ell_i}=\ell_{i+1}$ (for
$i=0,1,\ldots,k-1$). We will make such random assignments for sets
$S$ consisting of consecutive locations in some row of $W$. We call
the special (deterministic) chain that starts at the first (leftmost)
location of $S$, visits the next, and so on, until the last
(rightmost) location, a {\em path.}  Given two pointer chains
$\mathbf{p}_1$ and $\mathbf{p}_2$ on disjoint sets of locations
$S_1$ and $S_2$, we may
set $\ptr_{\tail(\mathbf{p}_1)} = \head(\mathbf{p}_2)$, and obtain
a single pointer chain
on $S_a \cup S_b$, whose head is $\head(\mathbf{p}_1)$ and tail is
$\tail(\mathbf{p}_2)$. We will refer to this operation as the {\em
  concatenation} of $\mathbf{p}_1$ and $\mathbf{p}_2$.

We are now ready to define the random input $X$. First, consider
$W$. For all $\ell \in W$, we set $b_{\ell}=0$. To describe the
pointers corresponding to $W$, we partition the columns of $W$ into
$K:=\log s - 3 \log \log s$ {\em blocks}, $W_1,\ldots,W_K$, where
$W_1$ consists of the first $s/(2K)$ columns of $W$, $W_2$ consists of
the next $s/(2K)$ columns, and so on. 
\newenvironment{centermath}
 {\begin{center}$\displaystyle}
 {$\end{center}}
 \begin{centermath}
\left[
\begin{array}{ c c c | c c c  c c c}
&&&\\
&&&\\
&\Large \mbox{$V$}&  & \hspace{.3in} & \large \mbox{$W_1$}  & \large \mbox{$W_2$} \hspace{0.3in }\mbox{$\ldots$} \hspace{0.3in } \large \mbox{$W_K$} \hspace{.3in} \\
&&& \\
&&& \\
\end{array}
\right]
\end{centermath}
The block $W_j$, will be further divided into {\em bands;} however,
the number of bands in different $W_j$ will be different.  There will
be $20 \cdot 2^j$ bands in $W_j$, each consisting of $w_j := s/(20
\cdot 2^{j} \cdot 2K)$ contiguously chosen columns. See
Figure~\ref{block}.  

\begin{figure}[h]
\begin{tikzpicture}
\draw[thick](0,0) rectangle (10,4);
\node at (10.5,2){$\ldots$};
\draw[thick](11,0) rectangle (13,4);

\foreach \x in {2,4,6,8}
	\draw (\x,0) -- (\x,4);
\draw(0,3) -- (2,3);
\draw(0,2.3) -- (2,2.3);
\foreach \x in {0,1/3,2/3,1,4/3,5/3}
	\draw(\x,2.3) -- (\x+0.1655,3);

\draw [thick, decorate,,decoration={mirror, brace,amplitude=10pt},xshift=-4pt,yshift=0pt]
(0.15,2.1) -- (2.15,2.1);

\node at (1,1.5){Segment};
\node(band) at (3,-1){\Large Bands};
\draw[->] (band) -- (3.3,0.2);
\draw[->] (band) -- (4.4,0.2);
\draw[->] (band) -- (6.2,0.5);

\node at (5,4.5){width $(w_j)=20 \cdot 2^j$};

\draw[<->] (4,4.2) -- (6,4.2);

\draw[<->] (-1,4) -- (-1,0);
\node at (-1.5,2){$r$};
\draw[<->] (0,5) -- (13,5);
\node at (5,5.4){\Large $\frac{s/2}{\log s - 3\log \log s}$};

\node at (5,-1.5){\huge $W_j$};

\end{tikzpicture}
\caption{Bands and segments inside block $W_j$.} \label{block}
\end{figure}
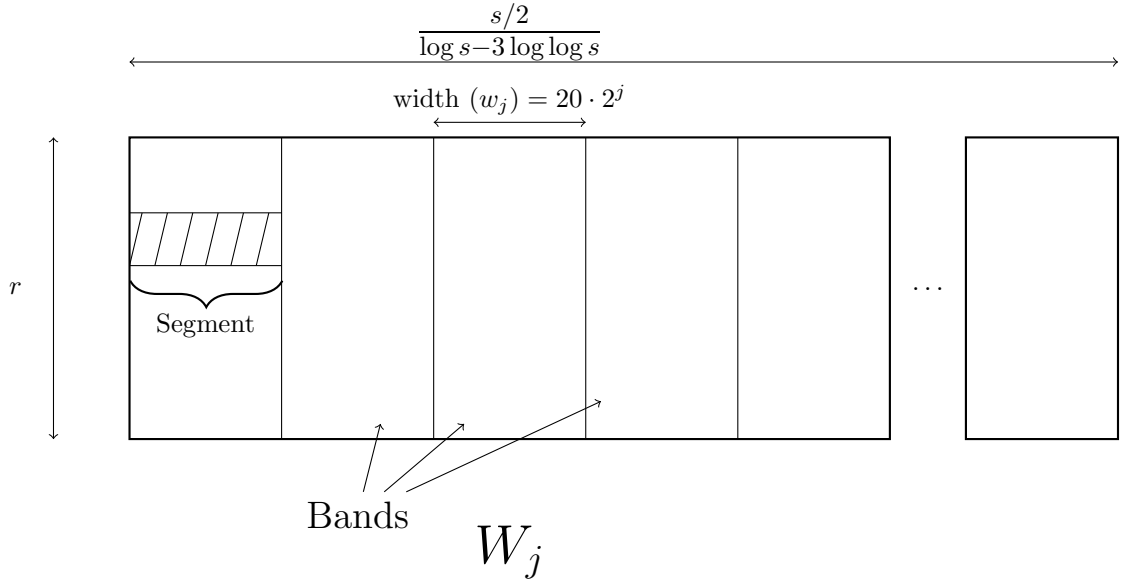

Each such band will have $r$ rows; the locations
in a single row of a band will be called a {\em segment}; we will
divide each segment into two equal parts, left and right, each with
$w_j/2$ columns. (See Figure~\ref{seg}.)

We
are now ready to specify the pointers in each segment of $W_i$. In the
first half of each segment we place a random (uniformly chosen)
pointer chain; in the right half we place a path starting at its
leftmost cell and leading to its rightmost cell.  Once all pointer
chains in all the segments in a given row are in place, we concatenate
them from left to right.  All pointers in the last column of $W$ are
set to $\bot$. In the resulting input, each row of $W$ is a single
pointer chain with head in the leftmost segment of $W_1$ and tail in
the last column of $W$. This completes the description of $X$ for the
locations in $W$.

Next, we consider locations in $V$. Let $q:= 500 \log s / \sqrt{r}$.
Independently, for each location $\ell \in V$:
\begin{itemize}
\item with probability $q$, set $b_{\ell}=0$ and $\ptr_\ell$ to be a
  random location that is in the {\em left half} of some segment in
  $W$ (that is, among all locations that fall in the left half of some
  segment, pick one at random and set $\ptr_\ell$ to that location);
\item with probability $1-q$, set $b_{\ell}=1$ and $\ptr_\ell = \bot$.
\end{itemize}
This completes the description of the random input $X$.

\begin{figure}[h]
\begin{tikzpicture}
\draw[thick](0,0) rectangle (10,1);

\foreach \x in {0.5,1,1.5,2,2.5,3,3.5,4,4.5,5.5,6,6.5,7,7.5,8,8.5,9,9.5}
	\draw[dotted] (\x,0) -- (\x,1);

\draw[ultra thick] (5,0) -- (5,1);

\draw[thick] (3.9,-0.1) rectangle (4.6,1.1);
\draw[thick] (-0.1,-0.1) rectangle (0.6,1.1);

\draw[->,thick] (4.3,1) to [out=+90, in=+90](0.8,1);
\draw[->, thick] (0.8,0) to [out=-90, in=-90](2.2,0);

\draw[->, thick, rounded corners] (2.3,0) -- (2.3,-0.3) --(2.7,-0.3) -- (2.7,0);
\draw[->, thick, rounded corners] (2.8,0) -- (2.8,-0.3) --(3.3,-0.3) -- (3.3,0);
\draw[->, thick] (3.3,1) to [out=+90, in=+90](1.3,1);
\draw[->, thick] (1.3,0) to [out=-90, in=-90](1.8,0);
\draw[->, thick] (1.8,1) to [out=+90, in=+90](4.8,1);
\draw[->, thick] (4.8,0) to [out=-90, in=-90](3.7,0);
\draw[->, thick] (3.6,0) to [out=-90, in=-90](0.3,0);

\draw[->, ultra thick] (0,-0.75) -- (0.2,0.4);
\node at (0,-1){\textbf{tail}};

\draw[->, ultra thick] (4,-0.75) -- (4.2,0.4);
\node at (4,-1){\textbf{head}};

\draw[->, thick, rounded corners] (0.3,1) -- (0.3,3) --(5.25,3) -- (5.25,1);

\draw[->, thick] (5.25,0.5) --(5.75,0.5);

\foreach \x in {5.75,6.25,6.75,7.25,7.75,8.25,8.75,9.25}
	\draw[->,thick](\x, 0.5) -- (\x+0.5,0.5);
\draw[->,dotted](9.75, 0.5) -- (10.25, 0.5);

\draw [thick, decorate,,decoration={mirror, brace,amplitude=10pt},xshift=-4pt,yshift=0pt]
(0.1,-1.5) -- (5.05,-1.5);

\draw [thick, decorate,,decoration={mirror, brace,amplitude=10pt},xshift=-4pt,yshift=0pt]
(5.15,-1.5) -- (10.1,-1.5);

\node at (2.5,-2.1){random pointer chain};

\node at (7.5,-2.1){path};

\end{tikzpicture}
\caption{A segment consists of a random pointer chain concatenated with a path.} \label{seg}
\end{figure}
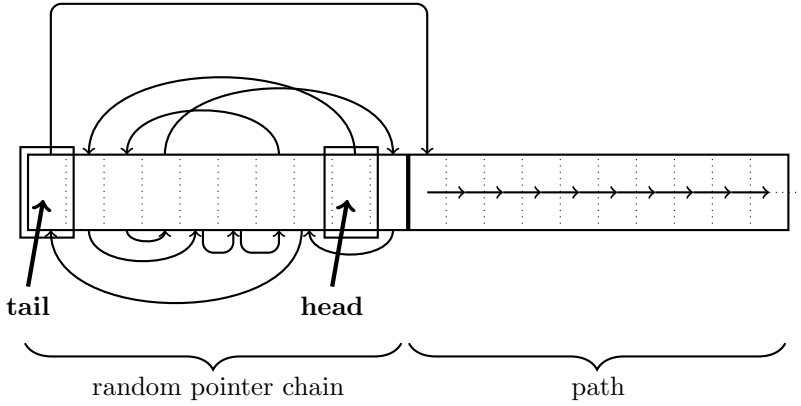

\section{The lower bound for $\GPW^{r \times s}$}
We will consider algorithms that are given query access to the input bits
of $\GPW^{r \times s}$. A location $\ell \in [r] \times [s]$ of an
input $X \in \cA$ is said to be queried if either $b_\ell$ is queried,
or some bit in the encoding of $\ptr_\ell$ is queried. By \emph{number of
queries}, we will always mean the number of locations queried.  A lower
bound on the number of locations queried is clearly a lower bound on
the number of bits queried.

It can be shown that the certificate complexity of $\GPW^{r \times s}$
is $\Omega(r+s)$; hence $R_0(\GPW^{r \times s})=\Omega(r+s)$. It remains
to show that any zero-error randomized query
algorithm for $\GPW^{r \times s}$ must make
$\Omega(\sqrt{r} s/\polylog(s))$ queries in expectation. We will assume
that there is a significantly more efficient algorithm and derive a
contradiction.
\begin{assumption}
\label{asmp}
There is a zero-error randomized algorithm that makes at most
$\sqrt{r}{s}/(\log s)^5$ queries in expectation (taken over the algorithm's
coin tosses) on every input $X$.
\end{assumption}
If $r <
(\log s)^3$ (say), then this assumption immediately leads to a contradiction
because $R_0(\GPW^{r \times s})=\Omega(s)$.
So, we will assume that $r \geq (\log s)^3$.

Consider inputs $X$ drawn according to the distribution described in
the previous section.  Since with probability $1-o(1)$ every column of
$X$ has at least one zero (see Lemma~\ref{1col} (a)), $\GPW^{r\times
  s}(X)=0$ with probability $1-o(1)$; thus, the algorithm returns the
answer $0$ with probability $1-o(1)$.  Taking expectation over inputs
$X$ and the algorithm's coin tosses, the expected number of queries
made by the algorithm is at most $\sqrt{r}s/(\log s)^5$. Using
Markov's inequality, with probability $1-o(1)$, the algorithm stops
after making at most $\sqrt{r}s/(\log s)^4$ queries. By truncating the long
runs and fixing the random coin tosses of the
algorithm, we obtain a deterministic algorithm. Hence we have the following.
\begin{proposition} \label{prop:deterministic}
If Assumption~\ref{asmp} holds, then there is a deterministic
algorithm that (i) queries at most $\sqrt{r} s/(\log s)^4$ locations,
(ii) never returns a wrong answer (it might give no answer on some
inputs), and (iii) returns the answer $0$ with probability $1-o(1)$
for the random input $X$.
\end{proposition}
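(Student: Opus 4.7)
The plan is to convert the hypothetical randomized algorithm of Assumption~\ref{asmp} into a deterministic one via three maneuvers: exploit that the distribution is almost entirely supported on $0$-inputs, apply Markov's inequality to the query count, and then fix the randomness by averaging. None of the steps is hard; the main point is to be careful that zero-error is preserved after the truncation.

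First, let $\mathcal{R}$ denote the randomized algorithm promised by Assumption~\ref{asmp}, and let $T := \sqrt{r}\,s/(\log s)^4$. By (the forward reference to) Lemma~\ref{1col}(a), with probability $1-o(1)$ every column of $X$ contains at least one $0$, so $\GPW^{r\times s}(X) = 0$ with probability $1-o(1)$ over $X$. Since $\mathcal{R}$ is zero-error, on any such $X$ it either outputs $0$ or outputs nothing, and in particular $\Pr_{X,\mathcal{R}}[\mathcal{R}(X) = 0] \geq 1 - o(1)$. By Assumption~\ref{asmp}, $\E_\mathcal{R}[Q(\mathcal{R},X)] \leq \sqrt{r}\,s/(\log s)^5$ for every fixed $X$, where $Q$ denotes the number of locations queried; taking the further expectation over $X$ and applying Markov's inequality gives
\[
\Pr_{X,\mathcal{R}}\bigl[ Q(\mathcal{R},X) > T \bigr] \;\leq\; \frac{\sqrt{r}\,s/(\log s)^5}{T} \;=\; \frac{1}{\log s} \;=\; o(1).
\]

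Next I would \emph{truncate} $\mathcal{R}$. Define $\mathcal{R}'$ to be the algorithm that simulates $\mathcal{R}$ but halts after $T$ queries, emitting ``no answer'' if $\mathcal{R}$ has not terminated by then, and otherwise returning $\mathcal{R}$'s output. By construction $\mathcal{R}'$ queries at most $T$ locations always, and since $\mathcal{R}$ never errs, neither does $\mathcal{R}'$; it merely abstains more often. Combining the two bounds above by a union bound,
\[
\Pr_{X,\mathcal{R}'}\bigl[ \mathcal{R}'(X) = 0 \bigr] \;\geq\; \Pr_{X,\mathcal{R}}\bigl[ \mathcal{R}(X) = 0 \text{ and } Q \leq T \bigr] \;\geq\; 1 - o(1).
\]

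Finally, to derandomize, I would fix the coins of $\mathcal{R}'$. Writing the above probability as $\E_{\rho}\bigl[ \Pr_X[ \mathcal{R}'_\rho(X) = 0 ] \bigr] \geq 1 - o(1)$, where the outer expectation is over the internal randomness $\rho$ of $\mathcal{R}'$, an averaging argument yields a particular setting $\rho^\ast$ for which $\Pr_X[\mathcal{R}'_{\rho^\ast}(X) = 0] \geq 1 - o(1)$. The deterministic algorithm $\mathcal{R}'_{\rho^\ast}$ inherits properties (i) and (ii) from $\mathcal{R}'$ and satisfies (iii) by the choice of $\rho^\ast$, completing the proof. The only potential subtlety is ensuring that the truncated algorithm is still zero-error, which is why it must abstain rather than guess; this is already built into the definition.
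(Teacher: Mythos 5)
Your proof is correct and follows essentially the same approach as the paper's (forward reference to Lemma~\ref{1col}(a) for the $0$-output, Markov on the combined randomness over $X$ and the coins, truncate at $T$, fix coins by averaging); the paper merely sketches these steps in a single paragraph, while you spell out the truncation-preserves-zero-error point and the averaging argument explicitly, which is a faithful elaboration rather than a different route.
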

Fix such a deterministic query algorithm $\cQ$. We will show that with
high probability the locations of $X$ that are left unqueried by $\cQ$
can be modified to yield an input $X'$ such that $\GPW^{r \times
  s}(X')=1$. Thus, with high probability, $\cQ(X') = \cQ(X) =0$.  This
contradicts
Proposition~\ref{prop:deterministic} (ii). In fact, in the next section,
we formally establish the following.
\begin{lemma}[Stitching lemma] \label{lm:stitching}
With probability $1-o(1)$ over the choices of $X$, there is an input
$X'\in \cA$ that differs from $X$ only in locations not probed by
$\cQ$ such that $\GPW^{r \times s}(X') = 1$.
\end{lemma}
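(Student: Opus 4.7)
The plan is to construct $X'$ by modifying only the cells of $X$ that $\cQ$ never probed, so that $X'$ agrees with $X$ on every queried location while $\GPW^{r\times s}(X')=1$. The construction has three stages: choose a column $j^*\in V$ and a row $i^*\in[r]$ to serve as the unique column/row pair witnessing conditions (1)--(2) of $\GPW$; rewrite column $j^*$ so that $b=1$ in every row and $\ptr=\bot$ in every row other than $i^*$; and build a pointer chain of length $s-1$ starting at $(i^*,j^*)$ that visits every remaining column exactly once and encounters $b=0$ at each step after $p_0$.

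Call a column of $V$ \emph{clean} if every cell that $\cQ$ probed in it has $b=1$. Since each bit of $V$ is independently $0$ with probability $q=500\log s/\sqrt r$ and $\cQ$ probes at most $\sqrt r\,s/(\log s)^4$ cells, the expected number of probes in $V$ that see $b=0$ is at most $q\cdot\sqrt r\,s/(\log s)^4=O(s/(\log s)^3)$, so all but $O(s/(\log s)^3)$ columns of $V$ are clean. A similar averaging shows most clean columns have many unqueried rows, so $i^*$ can be chosen to be unqueried and every flip needed to make column $j^*$ entirely $1$'s takes place at an unqueried cell. For the chain itself, the natural strategy is to ride along the pre-existing pointer chain of some row $i_0$ of $W$, which already traverses all $s/2$ columns of $W$, and to splice in $s/2-1$ \emph{$V$-detours} at segment boundaries: each detour leaves the $W$-chain at an unqueried cell by modifying its pointer, lands at a cell of an unvisited column $j'\in V\setminus\{j^*\}$, and immediately re-enters the next segment of $W$ (possibly in another row) via another modification.

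The main obstacle is showing that enough freedom is simultaneously available for all of these detours. Two complications have to be controlled. First, up to $s/(\sqrt r(\log s)^4)$ columns of $V$ may be entirely queried; through such a column the chain must use a pre-existing $b=0$ cell (which exists with high probability because the expected number of $0$'s per column is $qr=\Omega(\sqrt r\log s)$) and is forced to leave through that cell's fixed pointer, which lands at a specific location in the left half of some $W$-segment. Second, each modified exit from the $W$-chain must sit at an unqueried cell of $W$, and each modified entry into the next segment must also hit an unqueried cell; because the left halves of $W$-segments are independent uniformly random permutation chains, this reduces to an expansion/concentration statement saying that for every band $W_j$ only a polylog fraction of the $r\cdot w_j$ positions have been probed, leaving plenty of unqueried anchors in each segment. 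The band decomposition of $W$ into geometrically many bands of geometrically shrinking width is engineered exactly so this uniformity holds simultaneously in every band. A union bound over the $s/2-1$ $V$-detours then yields the lemma with probability $1-o(1)$.
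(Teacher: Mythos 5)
Your construction goes wrong at a structural level before the probabilistic analysis even begins. You propose to visit every column of $V\setminus\{j^*\}$ (all $s/2-1$ of them) by ``splicing in $V$-detours'' along a $W$-chain. But the paper's key observation is that this is both unnecessary and, as you have set it up, impossible. Unnecessary: a column of $V$ that is \emph{not} completely queried by $\cQ$ needs no detour at all --- one simply picks an unqueried cell in it, sets the bit to $0$, and wires its pointer directly to the next such column, building an initial chain $\mathbf{p}_0$ entirely inside $V$ that visits $j^*$ and all non-completely-queried columns before ever touching $W$. Impossible as you phrase it: you ask for $s/2-1$ detours ``at segment boundaries,'' but the total number of segments in any one row of $W$ is $\sum_{j=1}^{K}20\cdot 2^j = \Theta(s/(\log s)^3)$, far fewer than $s/2-1$. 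After the $A_0$ reduction, only the at most $s/(\sqrt r(\log s)^4)$ completely-queried columns ($A_1$ in the paper) require detours, and now there is plenty of room.

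The second gap is in what you call an ``expansion/concentration statement.'' You suggest it suffices that in every band $W_j$ only a polylog (really, a $O(1/\sqrt r\,\polylog)$) fraction of cells is probed. That is true but not enough. For a completely-queried column $d$, the exit pointer from $d$ is fixed by $X$: it lands at some specific cell $a=\ptr_q$ in the left half of some specific segment $p$, and the re-entry into the $W$-chain is forced to use $b=\pred(a)$. So what you need is a \emph{simultaneous} system of distinct choices: for each such column $d$, a pointer $q\in d$ whose target segment $p$ satisfies (i) $\pred(\ptr_q)$ unqueried, (ii) $p$ not mostly queried, and (iii) different $d$'s end up in different bands so the resulting chains occupy disjoint column ranges and can be concatenated in left-to-right band order. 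This is precisely a bipartite matching problem (the paper's Matching Lemma, Lemma~\ref{lm:matching}), and its proof requires Hall's theorem plus a delicate analysis of several bad events (the pointer colliding with a previous pointer's segment, landing on the head of the random permutation chain, its predecessor being probed by an adaptive $\cQ$, etc.). A crude ``small fraction of $W$ is probed'' bound does not give you the per-band, per-column disjoint structure that the stitching uses. Your sketch recognizes that the completely-queried columns are the hard case, but the argument you gesture at does not supply the matching that makes the concatenation of detours go through.
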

By the discussion above, this immediately implies Theorem~\ref{mainthm}.

\section{The approach}

In this section, we will work with the algorithm $\cQ$ that is
guaranteed to exist by Proposition~\ref{prop:deterministic}.  For an
input $X \in \cA$ to $\GPW^{r \times s}$, let $G_X=(V', W',E)$ be a
bipartite graph, where $V'$ is the set of columns of $V$ and $W'$ is
the set of all bands in all blocks of of $W$. The edge set $E(G_X)$
is obtained as
follows.  Recall that pointers from $V$ lead to segments in $W$. Each
such segment contains a pointer chain. For a location $\ell$ in such a
chain, let $\pred(\ell)$ denote the location $\ell'$
that precedes $\ell$ in the chain (if $\ell$ is the head,
then $\pred(\ell)$ is undefined); thus, 
$\ptr_{\ell'}=\ell$.  We include the edge $(j,\beta)$ (connecting
column $j \in V'$ to band $\beta \in W'$) in $E(G_X)$ if the following
holds: \\ \ \\
\emph{There is a location $v$ in column $j$ and a segment $p$ in some
row of band $\beta$ such that
\begin{enumerate}
\item[(c1)] $\ptr_v \in p$, that is, $\ptr_v$ is non-null and points to a
  location in the left half of segment $p$; 
\item[(c2)] $\pred(\ptr_v)$ is well defined and is not probed by $\cQ$; 
\item[(c3)] $\cQ$ makes fewer than $|p|/4$ probes in segment $p$. (Note that
  this implies that there is a location in the right half of $p$ that
  is left unprobed by $\cQ$.)
\end{enumerate}}
In the next section, we will show the following.
\begin{lemma}[Matching lemma] \label{lm:matching}
With probability $1-o(1)$ over the choice of $X$, for every subset $R
\subseteq V'$ of at most $s/(\sqrt{r}(\log s)^4)$ columns, there is a
matching in $G_X$ that saturates $R$.
\end{lemma}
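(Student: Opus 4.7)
The plan is to verify Hall's condition for $G_X$: prove that with probability $1-o(1)$, every $R \subseteq V'$ with $|R| \le s/(\sqrt{r}(\log s)^4)$ satisfies $|N_{G_X}(R)| \ge |R|$, at which point Hall's theorem yields the required matching. For each fixed $R$ of size $k$ in the allowed range, I would (i) lower-bound $\E[|N_{G_X}(R)|]$ by considerably more than $k$, (ii) use a Chernoff-type concentration inequality to bound $\Pr[|N_{G_X}(R)| < k]$ by $\exp(-\Omega(k \log s))$ with a large constant in the exponent, and (iii) take a union bound over the at most $(s/2)^k$ sets of size $k$, summing over $k$ from $1$ up to $s/(\sqrt{r}(\log s)^4)$.

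For the expectation, each column in $R$ has $rq = 500\sqrt{r}\log s$ active pointers in expectation, uniformly distributed over the $rs/4$ left-half locations, so each block $W_j$ receives a $1/K$ share. Within $W_j$, the expected number of pointers per band from a single column is $25\sqrt{r}\log s/(2^j K)$, so I would estimate $|N_{G_X}(R) \cap W_j|$ block-by-block: for small $j$ where each band receives many pointers, nearly all $20 \cdot 2^j$ bands of $W_j$ are hit; for large $j$ where pointers spread out and mostly hit distinct bands, the number of bands hit is on the order of the total pointer count into $W_j$, i.e.\ $\Theta(k\sqrt{r}\log s/K)$. Summing across the $K$ blocks, $\E[|N_{G_X}(R)|]$ comes out to be $\Omega(k\sqrt{r}\log s)$, comfortably exceeding $k$.

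The conditions (c2) and (c3) cost only a constant factor in this estimate. Because $\cQ$ makes at most $\sqrt{r}s/(\log s)^4$ probes overall and (c3) fails for a segment only if that segment receives $\ge w_j/4$ probes, at most a $O(K/(\sqrt{r}(\log s)^4))$ fraction of segments in any block fail (c3), which is $o(1)$ given $r \ge (\log s)^3$. For (c2), once we condition on the segment being lightly probed, the uniformly random pointer chain inside it places $\pred(\ptr_v)$ essentially uniformly over the segment, and only a small fraction of the segment's cells are probed; thus (c2) holds with probability close to $1$. Putting these together, a random active pointer from $R$ satisfies (c1)--(c3) with constant probability, preserving the expansion bound up to a constant factor.

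The main obstacle is the adaptive coupling between $\cQ$ and $X$: since $\cQ$'s probe set depends on $X$, naive independence arguments break. I plan to handle this via the principle of deferred decisions --- the values of $\ptr_\ell$ for unprobed cells $\ell \in V$ remain distributed according to the prior conditioned on $\cQ$'s transcript, and similarly the chain structure of a lightly-probed segment retains enough randomness for the (c2) analysis. Setting up this conditioning cleanly so that the Chernoff bound over the $|V|$ independent pointer choices applies, and simultaneously balancing the block-by-block estimates across all values of $k$ and $j$ under the single parameter choice $K = \log s - 3\log\log s$ (so that the small blocks are populous enough in each band for (c3) while the large blocks contain enough bands for expansion at the maximum $k$), is the technically delicate part.
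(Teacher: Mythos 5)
Your high-level plan matches the paper's (Hall's theorem, expansion into bands, Chernoff-type concentration, union bound over $R$), but two of the paper's key structural moves are missing, and their absence leaves genuine gaps rather than routine details to fill.

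First, the paper does not sum the expansion contribution across all $K$ blocks; instead it isolates \emph{the single block} $W_j$ whose index satisfies $2^j \le |R| < 2^{j+1}$, and shows $R$ already has enough band-neighbors inside that one block. This choice is not cosmetic. It makes the product $|R|\cdot w_j$ essentially constant ($\approx s/(40K)$) regardless of $|R|$, which is exactly what makes the condition (c3) argument work: if $\sqrt{r}|R|/2$ pointers landed in segments that $\cQ$ has probed $w_j/8$ times, $\cQ$ would have made $\gg \sqrt{r}s/(\log s)^4$ probes, a \emph{deterministic} contradiction with the probe budget. Your block-by-block sum loses this uniformity---$\cQ$ can dump its entire budget into one block of its choosing, and you would need a separate, size-dependent argument per block. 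You do observe that the \emph{fraction} of destroyed segments in each block is $o(1)$, but the fraction is not the right quantity: what you need is that among the specific pointers from $R$, enough land in undestroyed segments, and that requires coupling the random pointer destinations with $\cQ$'s adaptive probe set. The paper's block-matching trick is precisely what tames this coupling.

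Second, your claim that conditions (c2) and (c3) ``cost only a constant factor'' is asserted, not argued, and the argument is the actual substance of the proof. The paper first builds a reservoir (Claim on $B_j(R)$): at least $2|R|$ bands in $W_j$ each receive $\ge 2\sqrt{r}$ pointers from $R$ (this step is $\cQ$-independent and uses Chernoff plus the union bound you describe). It then defines four bad events $\event_1,\ldots,\event_4$ (two pointers to the same segment; landing on the chain head; segment over-probed; predecessor probed) and shows the counts $n_1,\ldots,n_4$ are each $<\sqrt{r}|R|/2$, so fewer than half the heavy bands can be entirely bad. Of these, the bound on $n_4$ is where your ``principle of deferred decisions'' must be made precise: the paper conditions on the pointers from $V$, then considers $\cQ$'s probes sequentially, defining indicators $\chi_i$ for ``the $i$-th probe hits the predecessor of the first incoming pointer, and the segment was still lightly probed,'' and proves $\Pr[\chi_i=1\mid\chi_1,\ldots,\chi_{i-1}]\le 4/w_j$ (because the pointer chain inside a lightly-probed segment remains a uniformly random permutation on the unrevealed locations). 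This gives a sum of stochastically dominated indicators, to which a Chernoff--Hoeffding bound applies. Your sketch correctly identifies that adaptivity is the obstacle, but the sequential martingale-type formulation and the event decomposition that make the argument go through are exactly what is missing.

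As a smaller remark: directly concentrating $|N_{G_X}(R)|$ is awkward because it is a function of both the $V$-pointers and the in-segment chains filtered through (c1)--(c3); it is not a sum of independent indicators over the $|V|$ pointer choices, so the Chernoff application you envision would need to be replaced by the paper's two-stage argument (expansion into heavy bands, then failure counting).
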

 In this section, we will show how Lemma~\ref{lm:matching} enables us
 to modify the input $X$ to obtain an input $X'$ for which $\GPW^{r
   \times s}(X')=1$, thereby establishing Lemma~\ref{lm:stitching}.
\begin{lemma}
\label{1col}
\begin{enumerate}
\item[(a)] With probability $1-o(1)$, each column $j$ of the input $X$ has
  a location $\ell$ such that $b_\ell=0$.

\item[(b)] With probability $1-o(1)$, there is a column $j \in [s/2]$ such
  that $\cQ$ does not read any location $\ell$ in column $j$ with
  $b_\ell=0$.
\end{enumerate}
\end{lemma}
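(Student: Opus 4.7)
The two parts require different arguments, so I would treat them separately.

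For part (a), observe that every $W$-location already has $b_\ell=0$ by construction, so only the $s/2$ columns of $V$ are at risk of containing no zero. Within a single $V$-column the $r$ bits $\{b_\ell\}$ are mutually independent, each equal to $0$ with probability $q=500\log s/\sqrt{r}$, so the probability that a given $V$-column has no zero is at most $(1-q)^r \le \exp(-qr) = \exp(-500\sqrt{r}\log s)$. Under the standing assumption $r\ge (\log s)^3$ used throughout this section, this quantity is super-polynomially small in $s$, and a union bound over the $s/2$ columns of $V$ gives the desired $1-o(1)$ conclusion.

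For part (b), the plan is a budget argument: $\cQ$ simply cannot afford enough queries to hit a zero in every $V$-column. By Proposition~\ref{prop:deterministic}, $\cQ$ probes at most $M := \sqrt{r}\,s/(\log s)^4$ locations on every input. Let $Z$ count the locations $\ell \in V$ that are both probed by $\cQ$ and satisfy $b_\ell=0$. I would establish the bound $\E[Z] \le qM = 500\,s/(\log s)^3$ via an adaptive-conditioning (principle of deferred decisions) argument: order the first-time probes to $V$-locations in the execution of $\cQ$ as $\ell_1,\ldots,\ell_{N_V}$, where $N_V\le M$. Because each $\ell_i$ and the entire transcript preceding its probe are determined by bits strictly outside the $\ell_i$-cell, and because $V$-bits are mutually independent of one another and independent of all $W$-bits, the random variable $b_{\ell_i}$ remains Bernoulli$(q)$ conditional on this history. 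Summing over $i$ yields $\E[Z] = q\,\E[N_V] \le qM$.

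Finally, Markov's inequality gives $\Pr[Z \ge s/4] \le 2000/(\log s)^3 = o(1)$. On the complementary high-probability event, the at most $s/4$ probed zeros lie in at most $s/4$ distinct columns, leaving at least $s/4$ of the $s/2$ columns of $V$ with no probed zero, which proves part (b). The only mildly delicate step in the whole argument is the adaptive-conditioning identity used to bound $\E[Z]$; the rest is a union bound and a single application of Markov.
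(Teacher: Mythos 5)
Your proof is correct and follows essentially the same route as the paper for both parts: part (a) is a direct computation of the failure probability for a single $V$-column plus a union bound, and part (b) bounds the expected number of zeros $\cQ$ reads in $V$ by $qM = 500s/(\log s)^3$ and applies Markov to conclude some column of $V$ has no probed zero. The only difference is cosmetic: you spell out the deferred-decisions justification for $\E[Z] \le qM$ (the paper just states the expectation bound, noting parenthetically that $\cQ$ does not reread locations), and you use the threshold $s/4$ where the paper uses $s/2$; both thresholds work.
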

\begin{proof}
\begin{enumerate}
\item[(a)]
All the bits in the columns in $[s] \setminus [s/2]$ are $0$. We show
that with high probability, each column in $V'$ has a $0$. The
probability that a particular column in $V'$ does not have any $0$
is $(1-500 \log s/\sqrt{r})^r \leq s^{-\Omega(\sqrt{r})}$.  Thus the
probability that there is a column $j \in V'$ which does not have
any $0$ is at most $(s/2) \cdot s^{-\Omega(\sqrt{r})} = o(1)$.

\item[(b)] Suppose $\cQ$ makes $t \leq s \sqrt{r}/(\log s)^4$
  queries. For $i=1,2,\ldots, t$, let $R_i$ be the indicator variable for
  the the event that in the $i$-th query, $\cQ$ reads a $0$ from $V$.
  Then, the expected number of $0$'s read by $\cQ$ in $V$
is (we assume that $\cQ$ does not read the same location twice)
\begin{align}
&\sum_{i=1}^q \mathbb{E}[R_i] \leq t \cdot 500 \log s/\sqrt{r} \leq 500 s / (\log s)^3. \nonumber
\end{align}
By Markov's inequality, with probability $1-o(1)$, the number number
of $0$'s read by $\cQ$ is less than $s/2$. It follows, that there is a
column in $V$ in which $\cQ$ has read no $0$.
\end{enumerate}
\end{proof}

\begin{proof}[Proof of Lemma~\ref{lm:stitching}]
Assume that the high probability events of Lemmas~\ref{lm:matching}
and~\ref{1col} hold. This happens with probability $1-o(1)$. We will
now describe a sequence of modifications to the input $X$ at locations
not queried by $\cQ$ to transform it into a input $X'$ such that
$\GPW^{r \times s}(X') = 1$. Let $j^* \in V'$ be the column in $V$
guaranteed by Lemma~\ref{1col} (b). Define $A_0 = \{\col_1, \ldots,
\col_N\} \subseteq V' \setminus \{j^*\}$ to be the set of columns in
$V' \setminus \{j^*\}$ that are not completely read by $\cQ$ (i.e. each
column in $A_0$ has a location unread by $\cQ$). Let $\ell_i$ be a
location in the column $\col_i$ that is unread by $\cQ$.  We first
make the following changes to $X$, with the aim of starting a pointer
chain at column $j^*$ that passes through $\col_1, \col_2, \ldots,
\col_N$.
\begin{enumerate}
 \item[(i)] For each unread location $\ell$ in the column $j^*$, set $b_{\ell}$ to $1$.
 \item[(ii)] Let $\ell^*$ be the first unread location of $j^{*}$. Set $\ptr_{\ell^{*}}$ to $\ell_1$.
\item[(iii)] For each location $\ell \neq \ell^*$ in column $j^*$, set $\ptr_\ell$ to $\bot$.
 \item[(iv)] For $i=1, \ldots, N-1$, set $b_{\ell_i}$ to $0$ and $\ptr_{\ell_i}$ to $\ell_{i+1}$.
 \item[(v)] Set $b_{\ell_N}$ to $0$.
\end{enumerate}
Clearly, the locations modified are not probed by $\cQ$. Notice that
the current input has the pointer chain $\mathbf{p}_0=(\ell^*, \ell_1,
\ldots, \ell_N)$ and the head $\ell^*$ of the chain lies in the
all-ones column $j^*$. Furthermore, all locations on the chain except
$\ell^*$ have $0$ as their bit. We now show how to further modify our
input and extend $\mathbf{p}$ and visit the remaining columns through
locations with $0$'s. The columns in $W$ are already neatly arranged
in pointer chains. The difficulty is in ensuring that we also visit
the set of columns in $V'$ that are completely read by $\cQ$, for we
are not allowed to make any modifications there.  Let $A_1$ denote
these completely read columns in $V'$. Since $\cQ$ makes at most
$\sqrt{r}s/(\log s)^4$ queries, we have that $|A_1| \leq
s/(\sqrt{r}(\log s)^4)$. Lemma~\ref{lm:matching} implies that there
exists a matching $\mathcal{M}$ in $G_X$ that saturates $A_1$. Order
the elements of $A_1$ as $d_1, \ldots, d_L$ in such a way for all
$i=1, \ldots, L-1$, $\mathcal{M}(d_i) < \mathcal{M}(d_{i+1})$ (where we order
the bands in $W$ from left to right), that is, the band that is
matched with $d_i$ lies to the left of the band that is matched to
$d_{i+1}$.

We will now proceed as follows. For $i=1, \ldots, L$, we modify the
input (at locations not read by $\cQ$) appropriately to induce a
pointer chain $\mathbf{p}_i$. This pointer chain in addition to
visiting a contiguous set of columns in $W$, will visit column
$d_i$. By concatenating these pointer chains
  in order with the initial pointer chain $\mathbf{p}_0$
we obtain the promised input $X'$ for which $\GPW^{r\times
    s}(X')=1$.

To implement this strategy, recall that there is an edge in $G_X$
between the column $d_i$ and the band $\mathcal{M}(d_i)$. From the
definition of $G_X$, it follows that there is a location $q_i$ in
$d_i$ and a segment $S_i$ in band $\mathcal{M}(d_i)$ such that
\begin{enumerate}
\item[(s1)] $\ptr_{q_i}$ leads to the left half of $S_i$;
\item[(s2)] $\pred(\ptr_{q_i})$ is not probed by $\cQ$;
\item[(s3)] $\cQ$ makes fewer than $|S_i|/4$ queries in segment $S_i$.
\end{enumerate}
First, let us describe how $\mathbf{p}_1$ is constructed.  Let $a_1 =
\ptr_{q_1}$ and $b_1 = \pred(a_1)$ (by (s2) $b_1$ is not probed by
$\cQ$); let $c_1$ be the first location in the second half of $S_1$
that is not probed by $\cQ$ (by (s3) there is such a location). Now,
we modify the input $X$ by setting $\ptr_{b_1} = q_1$.  Then,
$\mathbf{p}_1$ is the pointer chain that starts at the head of
the leftmost
segment of $W_1$ in the same row as $S_1$ and continues until location
$c_1$. That is, starting from its head, it follows the pointers of the input until
$b_1$. Then it follows the pointer leading out of $b_1$  into $q_1$,
thereby visiting column $d_1$. After that, it follows the pointer out of $q_1$ and
comes to $a_1$, and keeps following the pointers until $c_1$. 

In general, suppose $\mathbf{p}_1, \mathbf{p}_2, \ldots,
\mathbf{p}_{i-1}$ have been constructed. Suppose
$\tail(\mathbf{p}_{i-1})$ appears in column $k_{i-1}$. Then,
$\mathbf{p}_i$ is obtained as follows. Let $a_i = \ptr_{q_i}$ and $b_i
= \pred(a_i)$; let $c_i$ be the first location in the second half of
$S_i$ that is not probed by $\cQ$. We modify the input by setting
$\ptr_{b_i} = q_i$. Then $\mathbf{p}_i$ is the pointer chain with its
head in the same row as $a_i$ and in column $k_{i-1}+1$; this pointer
chain terminates in location $c_i$.  See Figure~\ref{stitch}. Note that $\mathbf{p}_{i}$
entirely keeps to one row (the row of $S_i$), except for the diversion from $b_i$ to
$q_i$, when it visits column $d_i$ and returns to $a_i$.
When $i=L$, we let the pointer chain continue until the last column of
$W$.  

In obtaining the pointer chains $\mathbf{p}_1, \mathbf{p}_2,
\ldots, \mathbf{p}_L$, we modified $X$ at location $b_1, b_2, \ldots,
b_L$.  Finally, we concatenate the pointer chains $\mathbf{p}_0,
\mathbf{p}_1, \ldots, \mathbf{p}_L$; this requires us to modify $X$ at
locations $\ell_N=\tail(\mathbf{p}_0), c_1, c_2, \ldots, c_{L-1}$,
which were left unprobed by $\cQ$. The resulting input after these
modifications is $X'$.

The pointer chain obtained by this concatenation visits each column other
than $j^*$ exactly
once, and the bit at every location on it, other than its head, is
$0$. Hence, $\GPW^{r \times s}(X')=1$.
\end{proof}

\begin{figure}[h]
\begin{tikzpicture}

\draw (4,4) --(8,4);
\draw (4,3.5) -- (8,3.5);
\draw[ rounded corners] (4,4) -- (4.1,3.8) --(3.9,3.6) -- (4,3.5) ;
\draw[ rounded corners] (8,4) -- (8.1,3.8) --(7.9,3.6) -- (8,3.5);

\draw[draw=black, fill=gray, opacity=0.2] (6.5,3.5) rectangle (6.8,4);
\node(a) at (6.6,4.3){$c_{i-1}$};
\node(b) at (6.6,5.2){column $k_{i-1}$};
\draw[->] (b) -- (a);

\draw (6.8,2.5) -- (11.8,2.5);
\draw (6.8,3) -- (11.8,3);

\draw[ rounded corners] (11.8,3) -- (11.7,2.8) --(11.9,2.55) -- (11.8,2.5) ;
\draw (7.8,2.5) -- (7.8,3);
\draw (10.8,2.5) -- (10.8,3);
\draw (9.3,2.5) -- (9.3,3);

\draw[draw=black, fill=gray, opacity=0.2] (8.6,2.5) rectangle (8.8,3);
\draw[draw=black, fill=gray, opacity=0.2] (8.1,2.5) rectangle (8.3,3);
\draw[draw=black, fill=gray, opacity=0.2] (10.3,2.5) rectangle (10.5,3);

\node at (8.7, 3.2){$a_i$};
\node at (8.2, 3.2){$b_i$};
\draw[dotted,->] (8.3,2.75) -- (8.6,2.75);

\node (bul) at (7,2.75){$\bullet$};

\draw[dotted]  (6.5,3.5) -- (6.5,2);
\draw[dotted]  (6.8,3.5) -- (6.8,2);

\draw[->] (6.8,3.75) to [out=0, in=+90](7,2.85);

\draw[->, rounded corners] (7,2.75) --  (7.2,2.85) -- (7.4,2.65) --  (7.6,2.85) -- (7.8,2.65) -- (8.1,2.75);
\draw[->, rounded corners] (8.8,2.75) -- (8.9,2.85) -- (9.1,2.65) -- (9.3,2.75);

\draw (2,1.54) -- (2,-0.95);
\draw (2.3,1.48) -- (2.3,-1.02);

\draw[draw=black, fill=gray, opacity=0.2] (2,0) rectangle (2.3,0.5);
\node at (1.7,0.25){$q_i$};

\node at (2.15,0.25){$0$};

\draw[->] (8.25,2.5) -- (2.3,0.4);
\draw[->](2.3,0.1) -- (8.75, 2.5);


\node at (10.4,3.2){$c_i$};
\node (bot) at (10.4,2){column $k_i$};
\draw[->](10.4,2.2) -- (10.4, 2.4);

\draw[ rounded corners] (1,1.5) --  (1.5,1.45) -- (2,1.55) --  (2.5, 1.45) -- (3,1.55);
\draw[ rounded corners] (1,-1) --  (1.5,-1.05) -- (2,-0.95) --  (2.5, -1.05) -- (3,-0.95);

\draw[->] (9.3,2.75) -- (9.6,2.75);
\draw[dotted] (9.6,2.75) -- (9.95,2.75);
\draw[->] (9.95,2.75) -- (10.3,2.75);

\draw[ rounded corners] (1,1.5) --  (1.05,1) -- (0.95,0.5) --  (1.05, 0) -- (0.95,-0.5) -- (1,-1);
\draw[ rounded corners] (3,1.55) --  (3.05,1) -- (2.95,0.5) --  (3.05, 0) -- (2.95,-0.5) -- (3,-0.95);

\node (top) at (2.15, 2.65){column $d_i$};
\draw[->] (top) -- (2.15, 1.7);

\node at (1.5,-0.35){\Large $V$};

\draw[dotted] (7.8, 2.5) -- (7.8, -1.5);
\draw[dotted] (9.3, 2.5) -- (9.3, -1.5);
\draw[dotted] (10.8, 2.5) -- (10.8, -1.5);

\draw [thick, decorate,,decoration={mirror, brace,amplitude=10pt},xshift=-4pt,yshift=0pt]
(7.95,-1.5) -- (9.45,-1.5);

\node at (8.5, -2.1){left half};
\node at (8.5, -2.5){ of $\mathcal{M}(d_i)$};

\node at (10.1, -2.1){right half};
\node at (10.1, -2.5){ of $\mathcal{M}(d_i)$};

\draw [thick, decorate,,decoration={mirror, brace,amplitude=10pt},xshift=-4pt,yshift=0pt]
(9.45,-1.5) -- (10.95,-1.5);

\draw[->, dotted] (10.5,2.75) to [out=0, in=-90](11,3.75);

\end{tikzpicture}
\caption{Construction of pointer chain $\mathbf{p}_i$} \label{stitch}
\end{figure}
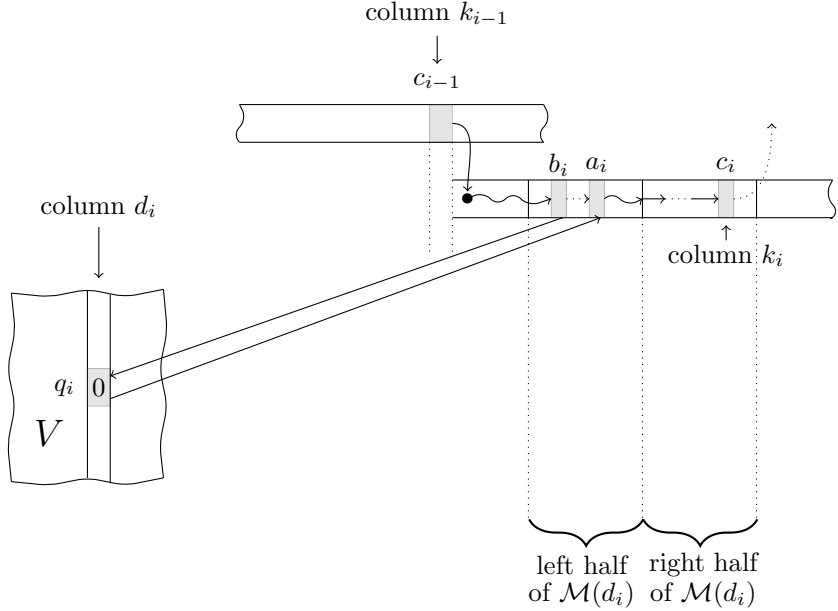

\section{Proof of the matching lemma}

We will show that every subset $R \subseteq V'$ of at most
$s/(\sqrt{r}(\log s)^4)$ columns has at least $|R|$ neighbors in $W'$.
Then, the claim will follow from Hall's theorem.

Observe that with high probability every column in $V'$ has
$\Omega(\sqrt{r}\log s)$ pointers leaving it. We expect these pointers
to be uniformly distributed among the at most $\log s$ blocks in $W$;
in particular, we should expect that every column in $V'$ sends
$\Omega(\sqrt{r})$ pointers into each block. We now formally establish this.
\begin{claim} \label{cl:eachcolexpands}
Let $V_j$ be the $j$-th column of $V'$ and $W_{j'}$ the $j'$-th block of $W$; then,
\[
\Pr[\forall j,j': |\ptr(V_j) \cap W_{j'}| \leq 400 \sqrt{r}] =o(1).\]
\end{claim}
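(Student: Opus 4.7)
The displayed probability, read literally, is the probability that all $|\ptr(V_j)\cap W_{j'}|$ simultaneously fall short of $400\sqrt{r}$; the surrounding discussion (``every column in $V'$ sends $\Omega(\sqrt{r})$ pointers into each block. We now formally establish this'') makes clear that the intended statement is that with high probability \emph{every} pair $(j,j')$ satisfies $|\ptr(V_j)\cap W_{j'}|\ge 400\sqrt{r}$, that is,
\[\Pr\!\bigl[\exists j,j':\;|\ptr(V_j)\cap W_{j'}|\le 400\sqrt{r}\bigr] = o(1).\]
I will prove this version, which is what the matching lemma uses, by a single-pair Chernoff plus a union bound.

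The first step is to compute $\mu := \E\!\bigl[|\ptr(V_j)\cap W_{j'}|\bigr]$. By construction, each of the $r$ locations in $V_j$ emits a pointer independently with probability $q=500\log s/\sqrt{r}$, and conditional on emitting, the pointer is uniform over all left-half cells of $W$. A bookkeeping computation gives that the number of left-half cells in block $W_{j'}$ is $20\cdot 2^{j'}\cdot r\cdot (w_{j'}/2) = rs/(4K)$, which is the same for every $j'$; summing, the total number of left-half cells in $W$ is $rs/4$. Thus each location in $V_j$ points into $W_{j'}$ with probability exactly $q/K$, and
\[\mu = rq/K = \frac{500\sqrt{r}\log s}{\log s-3\log\log s} > 500\sqrt{r}.\]

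Since $|\ptr(V_j)\cap W_{j'}|$ is a sum of $r$ independent Bernoulli indicators, a standard multiplicative Chernoff bound at relative deviation $1/5$ yields
\[\Pr\!\bigl[|\ptr(V_j)\cap W_{j'}| < 400\sqrt{r}\bigr]\le \Pr\!\bigl[S < (4/5)\mu\bigr] \le \exp(-\mu/50) \le \exp(-10\sqrt{r}).\]
There are at most $(s/2)\cdot K \le s\log s$ pairs $(j,j')$. Using the standing hypothesis $r\ge(\log s)^3$ (so that $\sqrt{r}\ge(\log s)^{3/2}$), a union bound gives total failure probability at most $s\log s\cdot \exp(-10(\log s)^{3/2}) = o(1)$, as required.

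There is no real obstacle beyond the band geometry: the parameters were tuned so that the number of bands in $W_{j'}$ grows like $2^{j'}$ while the band width $w_{j'}$ shrinks like $2^{-j'}$, keeping the (left-half) cell count of every block equal and making the ``land in $W_{j'}$'' probability exactly $1/K$, independent of $j'$. Once this is observed, $\mu = \Theta(\sqrt{r})$ and the Chernoff tail, which is exponentially small in $\sqrt{r}\ge (\log s)^{3/2}$, comfortably absorbs the polynomial-in-$s$ union bound.
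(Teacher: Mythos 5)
Your proof is correct and takes essentially the same route as the paper: compute the expectation of $|\ptr(V_j)\cap W_{j'}|$, apply a lower-tail multiplicative Chernoff bound, and finish with a union bound over the $O(s\log s)$ pairs $(j,j')$, using $r\ge(\log s)^3$ to absorb the union. You also correctly flag that the displayed inequality in the claim is a typo (the quantifier should be $\exists$, or equivalently the inequality reversed under $\forall$), and that the matching lemma in fact uses the stronger ``every pair expands'' version; the paper's own proof (Chernoff plus a union bound over $j,j'$) is clearly aimed at that stronger version. The only cosmetic difference is that you observe the block geometry makes the landing probability exactly $1/K$, whereas the paper simply lower-bounds it by $1/\log s$; both give $\mu\ge 500\sqrt{r}$ and the rest is identical.
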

\begin{proof}
Fix a location in $\ell \in V_j$. Let $\chi_\ell$ be the indicator
variable for the event $\ptr_\ell \in W_{j'}$. Then, the number of
pointers from $V_j$ into $W_{j'}$ is precisely $\sum_{\ell \in V_j}
\chi_\ell$. Since
\[ \Pr[\chi_\ell = 1] \geq \frac{500 \log s}{\sqrt{r}} \times \frac{1}{\log s} = \frac{500}{\sqrt{r}},\]
the expected number of pointers from column $V_j$ into $W_{j'}$ is at
least $500 \sqrt{r}$. Our claim follows from the Chernoff bound and
the union bound (over choices of $j$ and $j'$ since $r = \Omega((\log
s)^3)$). Here, we use the following version of the Chernoff bound (see
Dubhashi and Panconesi~\cite{ DBLP:books/daglib/0025902}, page 6): for the
sum of $r$ independent 0-1 random variables $Z_\ell$, each taking the
value $1$ with probability at least $\alpha$,
\[\Pr[ \sum_\ell X_\ell \leq (1-\eps)\alpha r ] \leq \exp(- \frac{\eps^2}{2} \alpha r)
.\]
Note that in our application $\alpha r \gg \sqrt{r} \geq \log s$.
\end{proof}

Suppose $j$ is such that $2^j \leq |R| < 2^{j+1}$. Then, we will show
that $R$ has the required number of neighbors among the bands of
the block $W_j$.
\begin{claim} \label{cl:manyheavybands}
For a set $R \subseteq V'$ and a block $W_j$, consider the set of bands of $W_j$
into which at least $2\sqrt{r}$ pointers from $R$ fall, that is,
\[ B_j(R) := \{b \in W_j: |\ptr(R) \cap b| \geq 2\sqrt{r}\},\]
 Then, for $j=1, \ldots, K$ and for all all  $R$ such that $2^j \leq |R|< 2^{j+1}$, we have
\[ \Pr[|B_j(R)| \leq 2|R|] =o(1).\]
\end{claim}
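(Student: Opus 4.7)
The plan is a two-layer Chernoff argument. First, fix $j$ and $R$ with $2^j \le |R| < 2^{j+1}$, and let $N$ denote the total number of pointers from locations of $R$ that land in block $W_j$. Each location of $V$ independently contributes such a pointer with probability $q/K = 500\log s/(\sqrt r \cdot K)$: with probability $q$ its pointer is non-null, and conditional on that it is uniform over the left-halves of segments, of which $W_j$ holds a $1/K$-fraction. Thus $\E[N] \ge 500\,|R|\sqrt r$ (using $K\le \log s$), and a standard Chernoff bound gives $N \ge 400\,|R|\sqrt r$ except with probability $\exp(-\Omega(|R|\sqrt r))$.

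Next we translate the event $|B_j(R)| \le 2|R|$ into a concentration statement. If fewer than $2|R|$ bands of $W_j$ are ``good'' (each absorbing $\ge 2\sqrt r$ pointers from $R$), then the remaining $\ge 20\cdot 2^j - 2|R|$ ``bad'' bands collectively absorb at most $(20\cdot 2^j)(2\sqrt r) \le 40\,|R|\sqrt r$ pointers (using $|R|\ge 2^j$), which on the high-probability event above is at most $N/10$. So some set $S$ of $2|R|$ bands of $W_j$ must absorb at least $9N/10$ pointers. For any fixed $S$, each pointer that lands in $W_j$ falls in $S$ independently with probability at most $2|R|/(20\cdot 2^j) \le 1/5$ (using $|R|<2^{j+1}$), so conditional on $N$ the count in $S$ is stochastically dominated by a Binomial$(N,1/5)$ variable with mean $N/5$. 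Reaching $9N/10 = 4.5(N/5)$ therefore has probability $\exp(-\Omega(N)) = \exp(-\Omega(|R|\sqrt r))$ by Chernoff.

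It remains to union-bound. Over the choices of $S$, $\binom{20\cdot 2^j}{2|R|} \le (10e\cdot 2^j/|R|)^{2|R|} \le (10e)^{2|R|}$ (since $|R|\ge 2^j$), contributing only an $\exp(O(|R|))$ factor. Over subsets $R \subseteq V'$ of a fixed size $k \in [2^j,2^{j+1})$ there are $\binom{s/2}{k} \le \exp(k \log s)$ choices, and finally there are only $K=O(\log s)$ values of $j$. Putting everything together, the failure probability is at most $\exp\bigl(O(|R|\log s) - \Omega(|R|\sqrt r)\bigr)$, which is $\exp(-\Omega(|R|\sqrt r))$ because the standing assumption $r\ge (\log s)^3$ forces $\sqrt r \gg \log s$; summing the resulting geometric series in $2^j$ yields the claimed $o(1)$ bound.

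The main obstacle, and exactly the place where the assumption $r \ge (\log s)^3$ is used, is balancing the union-bound loss $\exp(|R|\log s)$ from enumerating the $R$'s against the Chernoff gain $\exp(-|R|\sqrt r)$. The regime $|R|\in[2^j,2^{j+1})$ is also essential on the probabilistic side: it makes $\E[N]$ large enough that the fixed slack $40\,|R|\sqrt r$ carried by the bad bands is only a $1/10$ fraction of $N$, and it keeps the binomial coefficient $\binom{20\cdot 2^j}{2|R|}$ of order $\exp(O(|R|))$ rather than exploding.
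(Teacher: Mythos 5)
Your proof is correct and follows essentially the same structure as the paper's: condition on $\Omega(|R|\sqrt{r})$ pointers from $R$ landing in $W_j$, observe that if $|B_j(R)| \leq 2|R|$ then some fixed set of $2|R|$ bands must absorb a large constant fraction of them, bound that probability for a fixed set of bands, and union bound over the set, over $R$, and over $j$, absorbing the $\exp(O(|R|\log s))$ entropy loss into $\exp(-\Omega(|R|\sqrt{r}))$ via $r\ge(\log s)^3$. The only cosmetic differences are that you establish the conditioning event directly by a single Chernoff bound on the aggregate count rather than invoking the per-column statement of Claim~\ref{cl:eachcolexpands}, and you use a Chernoff tail for the fixed-set estimate where the paper uses the explicit combinatorial bound $\binom{400|R|\sqrt r}{300|R|\sqrt r}\bigl(\tfrac{2|R|}{20\cdot 2^j}\bigr)^{300|R|\sqrt r}\le 2^{-100\sqrt r |R|}$.
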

\begin{proof}
We will use the union bound over the choices of $j$ and $R$. Fix the
set $R$. We may, using Claim~\ref{cl:eachcolexpands}, condition on
the event that there are at least $400 \sqrt{r}|R|$ pointers from $R$
to $W_j$. Fix $400 \sqrt{r}|R|$ of these pointers. Now,  the number
of pointers that fall outside $B_j(R)$ is at most $20 \cdot
2^j \cdot 2 \sqrt{r} \leq 100 \sqrt{r}|R|$. That is,if $|B_j(R)| < 2|R|$, then there is a set
$T$ of $2|R|$ bands into which more than $400 \sqrt{r}|R|-100 \sqrt{r}|R|
=300 \sqrt{r}|R|$
pointers from $R$ fall. We will show that it is unlikely for such a
set $T$ to exist.  For a fixed $T$, the probability of this
event is at most
\[ {400 |R| \sqrt{r} \choose 300 |R| \sqrt{r}}
\left( \frac{2 |R|}{20 \cdot 2^j} \right)^{300 \sqrt{r}|R|} \leq 2^{-100
  \sqrt{r} |R|}.\] Using the union bound to account for all choices of
$R$ and the ${20 \cdot 2^j \choose 2|R|}$ choices of $T$, and using the fact that
$\sqrt{r} \gg \log s$, we conclude
that the probability that $B_j(R)$ fails to be large enough is at most
\[
\sum_{j=0}^{\log s - 3 \log \log s} \quad
\sum_{m=2^j}^{2^{j+1}-1}  {s/2 \choose m} 
{20 \cdot 2^j \choose 2m}
2^{-100 \sqrt{r}m} = o(1).\]
\end{proof}

In order to show that with high probability the set $R$ has the
required number of neighbors, we will condition on the high
probability event of Claim~\ref{cl:manyheavybands}, that is, $|B_j(R)|
> 2|R|$.  Let $\mathcal{B}$ be the set of such bands $b$ that receive
at least $2\sqrt{r}$ pointers. For each $b \in \mathcal{B}$, let
$P(b)$ be a set of $2 \sqrt{r}$ locations in the columns in $R$ whose
pointers land in $b$. If in at least $|R|$ of the $2|R|$ such bands
$b$, there is a pointer from $P(b)$ satisfying the conditions
(c1)--(c3), then we will have obtained the required expansion.  Fix a
pointer out of $P(b)$ (which by definition of $P(b)$ lands in band
$b$), and consider the following events.
\begin{description}
\item[$\event_1$:] The pointer leads to the same segment as a previous
  pointer (assume the locations in $P(b)$ are totally ordered in some
  way).
\item[$\event_2$:] The pointer leads to the first entry of the
  pointer chain in its segment (so, that location has no predecessor).
\item[$\event_3$:] At least $w_j/8$ entries of the segment that the pointer
lands in, are probed by $\cQ$.
\item[$\event_4$:] The
  predecessor of the location where the pointer lands is probed by
  $\cQ$.
\end{description}
Consider the pointers that emanate from $P(b)$ and land in some
band $b \in \mathcal{B}$.  Let
$n_1$ be the number of those pointers for whom $\event_1$ holds; let
$n_2$ be the number of those pointers for whom $\event_2$ holds; let
$n_3$ be the number of those pointers for whom $\event_3$ holds but
$\event_1$ does not hold; let $n_4$ be the number of those pointers for
whom $\event_4$ holds but $\event_1, \event_2$ and $\event_3$ do not hold.

If the claim of our lemma does not hold, then it must be that in at
least $|R|$ of the $2|R|$ bands of $\mathcal{B}$, all pointers that fall there fail to
satisfy at least one of the conditions (c1)--(c3); that is, one of $\event_1, \ldots,
\event_4$ holds for all $2\sqrt{r}$ of them. This implies that
\begin{equation}
n_1 + n_2 + n_3 + n_4 \geq 2\sqrt{r} |R|.
\end{equation}
To prove our claim, we will show that with high probability each
quantity on the left is less than $\sqrt{r}|R|/2$. In the following,
we fix a set $R$ and separately estimate the probability that one of
the quantities on the left is large. To establish the claim for all
$R$, we will use the union bound over $R$. In the proof, we use the
following version of the Chernoff-Hoeffding bound, which can be found in
Dubhashi and Panconesi (\cite{DBLP:books/daglib/0025902}, page $7$).
\begin{lemma}[Chernoff-Hoeffding bound]
\label{chernoff}
Let $X:=\sum_{i \in [n]} X_i$ where $X_i, i \in [n]$ are independently distributed in
 $[0,1]$. Let $t > 2e\mathbb{E}[X]$. Then
\[\mathbb{P}[X > t] \leq 2^{-t}.\]
\end{lemma}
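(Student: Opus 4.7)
The statement is a standard form of the multiplicative Chernoff bound for a sum of independent $[0,1]$-valued random variables, written so that the tail probability comes out as a clean $2^{-t}$ under the relatively strong hypothesis $t > 2e\mathbb{E}[X]$. The plan is to follow the classical exponential moment method and verify that this hypothesis is exactly what one needs to reach the stated form.

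First I would introduce a parameter $\lambda > 0$ and apply Markov's inequality to $e^{\lambda X}$, giving $\Pr[X > t] \leq e^{-\lambda t}\,\mathbb{E}[e^{\lambda X}]$. By independence of the $X_i$, the moment generating function factors as $\prod_i \mathbb{E}[e^{\lambda X_i}]$. For each $X_i \in [0,1]$, convexity of the exponential yields the pointwise inequality $e^{\lambda X_i} \leq 1 + X_i(e^\lambda - 1)$, hence $\mathbb{E}[e^{\lambda X_i}] \leq 1 + \mathbb{E}[X_i](e^\lambda - 1) \leq \exp(\mathbb{E}[X_i](e^\lambda - 1))$. Multiplying across $i$ and setting $\mu := \mathbb{E}[X]$ produces the compact estimate $\Pr[X > t] \leq \exp(-\lambda t + \mu(e^\lambda - 1))$.

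Next I would optimize over $\lambda$: the minimum of the right-hand exponent is attained at $\lambda = \ln(t/\mu)$, which reduces the bound to the classical form $\Pr[X > t] \leq e^{-\mu}\,(e\mu/t)^t$. At this point the hypothesis $t > 2e\mu$ kicks in: it says precisely that $e\mu/t < 1/2$, so $(e\mu/t)^t < 2^{-t}$, and since $e^{-\mu} \leq 1$ the desired inequality $\Pr[X > t] \leq 2^{-t}$ follows. The proof is essentially routine; the only place any care is needed is the final step of matching the constant $2e$ to the base-$2$ decay, and there is no genuine obstacle.
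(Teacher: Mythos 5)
The paper does not actually prove this lemma: it simply cites Dubhashi and Panconesi (their book, page 7), so there is no in-paper argument to compare your proposal against. That said, your proof is a correct and complete derivation via the standard exponential-moment method: Markov's inequality applied to $e^{\lambda X}$, the convexity bound $e^{\lambda x} \leq 1 + x(e^\lambda - 1)$ for $x \in [0,1]$, the resulting estimate $\Pr[X > t] \leq \exp(-\lambda t + \mu(e^\lambda - 1))$ with $\mu = \E[X]$, optimization at $\lambda = \ln(t/\mu)$ to get $e^{-\mu}(e\mu/t)^t$, and finally the observation that $t > 2e\mu$ forces $e\mu/t < 1/2$, giving $2^{-t}$. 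The only pedantic gap is the degenerate case $\mu = 0$, where $\lambda = \ln(t/\mu)$ is undefined; there one notes that $X = 0$ almost surely so the bound holds trivially. This is a routine fix and does not affect the substance. Your proof supplies exactly the detail the paper outsources to its reference.
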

\begin{claim}
$\displaystyle \Pr[ n_1 \geq \sqrt{r}|R|/2] \leq 2^{-r|R|/2}.$
\end{claim}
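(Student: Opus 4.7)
The plan is to express $n_1$ as a sum of indicator random variables and apply the Chernoff--Hoeffding bound (Lemma~\ref{chernoff}) after controlling its expectation. Fix the set $R$ of columns and, invoking Claim~\ref{cl:manyheavybands}, a subcollection of $2|R|$ bands from $\mathcal{B}$ together with an arbitrary total order on each $P(b)$. For such a band $b$ and $k \in \{2,\ldots,2\sqrt{r}\}$, let $X_{b,k}$ be the indicator of the event that the $k$-th pointer of $P(b)$ lands in the same segment of $b$ as one of the first $k-1$ pointers of $P(b)$, so that $n_1 = \sum_{b,k} X_{b,k}$.

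The key observation is to split the randomness of each pointer from $V$ into two independent stages: first, the choice of destination \emph{band} in $W$; second, conditional on the band, the uniform choice of destination \emph{cell} within the left halves of the $r$ segments of that band. The set $\mathcal{B}$ and the subsets $P(b)$ are measurable with respect to the first stage alone, so after conditioning on the first stage, each pointer in $P(b)$ still lands uniformly in one of the $r$ segments of $b$, independently of the others. Consequently,
\[
\Pr\bigl[\,X_{b,k}=1 \;\big|\; \text{first stage and the first } k-1 \text{ landing cells in } P(b)\,\bigr] \;\leq\; \frac{k-1}{r}.
\]
Summing over $k$ bounds the expected number of collisions in any band by $\sum_{k=1}^{2\sqrt{r}}(k-1)/r \leq 2$, hence $\E[n_1 \mid \text{first stage}] \leq 4|R|$ deterministically, and thus $\E[n_1] \leq 4|R|$.

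Because conditionally on the first stage the variables $X_{b,k}$ are stochastically dominated by independent Bernoullis of parameters $(k-1)/r$, Lemma~\ref{chernoff} applied to the dominating sum with $t = \sqrt{r}|R|/2$ yields the desired tail bound; the required hypothesis $t > 2e\,\E[n_1]$ holds whenever $\sqrt{r} \geq 16e$, which follows from $r \geq (\log s)^3$. The resulting exponential bound is comfortably strong enough to absorb the subsequent union bound over $R$, since $\sqrt{r} \gg \log s$. The most delicate point is the two-stage decomposition: without it, the conditioning on $\mathcal{B}$ and on each $P(b)$ would appear to bias the cell-within-band distribution, but the decomposition confirms that this conditioning reveals only the band each pointer falls into, not its specific cell. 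I do not expect this particular claim to be the main obstacle; the analogous bounds for $n_2, n_3,$ and $n_4$ involve the probes of $\cQ$ and should be more technically involved.
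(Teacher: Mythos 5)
Your proof is correct and takes essentially the same route as the paper: bound $\E[n_1]$ by $O(|R|)$ and apply Lemma~\ref{chernoff} (the stated exponent $r|R|/2$ in the claim is a typo for $\sqrt{r}|R|/2$, which is what both the paper's proof and yours actually deliver). The paper gets $\E[n_1]\leq 8|R|$ by crudely bounding each per-pointer collision probability by $2\sqrt{r}/r$; your two-stage (band, then cell-within-band) decomposition gives the slightly sharper $(k-1)/r$ and, more importantly, makes explicit why conditioning on $\mathcal{B}$ and $P(b)$ does not bias the within-band cell distribution and why stochastic domination by independent Bernoullis licenses the use of Lemma~\ref{chernoff} here---details the paper leaves implicit.
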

\begin{proof}
The probability that a pointer from $P(b)$ falls on a segment of a
previous pointer is at most $2\sqrt{r}/r$. Thus, the expected value of
$n_1$ is at most $8|R|$. We may invoke lemma~\ref{chernoff} and conclude that
\[ \Pr[ n_1 \geq \sqrt{r}|R|/2] \leq 2^{-\sqrt{r}|R|/2}. \]
\end{proof}

\begin{claim}
$\displaystyle \Pr[ n_2 \geq \sqrt{r}|R|/2] \leq 2^{-\sqrt{r}|R|/2}.$
\end{claim}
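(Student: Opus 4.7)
The plan is to bound $n_2$ above by a sum of independent indicators and then apply the Chernoff--Hoeffding bound of Lemma~\ref{chernoff}, exactly as was done for $n_1$. Concretely, for each of the $r|R|$ locations $\ell$ belonging to a column of $R$, let $X_\ell$ be the indicator of the event that the pointer at $\ell$ is set (which happens independently with probability $q = 500\log s/\sqrt{r}$) and moreover lands on the head of the random pointer chain in the left half of some segment of block $W_j$. Since the $\{X_\ell\}_{\ell}$ depend only on the independent pointer choices made at the locations of $V$, they are mutually independent. I would then observe that $n_2 \le \sum_\ell X_\ell$: every pointer counted by $n_2$ originates in some column of $R$, lands in $W_j$ (since $\mathcal{B} \subseteq W_j$), and, by definition of $\event_2$, hits the head of its segment's chain.

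Next I would estimate $\E\left[\sum_\ell X_\ell\right]$. A pointer out of $\ell$ is uniformly distributed over the left halves of all segments in $W$, and by the earlier computation a $1/K$ fraction of these positions lie in $W_j$. Within any segment of $W_j$, the left half has $w_j/2$ positions and the head of the random chain is a uniformly chosen location, so conditional on landing in a specific segment the pointer hits its head with probability $2/w_j$. Combining,
\[
\Pr[X_\ell = 1] \;\le\; q \cdot \frac{1}{K} \cdot \frac{2}{w_j}
\;=\; q \cdot \frac{1}{K} \cdot \frac{80 \cdot 2^j K}{s}
\;=\; \frac{80 \cdot 2^j q}{s}.
\]
Since $2^j \le |R|$ and there are $r|R|$ locations in $R$, this gives
\[
\E\Bigl[\sum_\ell X_\ell\Bigr] \;\le\; r|R| \cdot \frac{80 \cdot |R| \cdot 500\log s}{s\sqrt{r}}
\;=\; \frac{40000\sqrt{r}\,|R|^2 \log s}{s}.
\]

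Finally, I would use the hypothesis $|R| \le s/(\sqrt{r}(\log s)^4)$ to conclude that this expectation is at most $40000 \sqrt{r}|R|/(\log s)^3$, which is far below $\sqrt{r}|R|/(4e)$ once $s$ is large (recall $r \ge (\log s)^3$). In particular $\sqrt{r}|R|/2 > 2e \cdot \E[\sum_\ell X_\ell]$, so Lemma~\ref{chernoff} applies with $t = \sqrt{r}|R|/2$ and yields
\[
\Pr\bigl[n_2 \ge \sqrt{r}|R|/2\bigr] \;\le\; \Pr\Bigl[\sum_\ell X_\ell \ge \sqrt{r}|R|/2\Bigr] \;\le\; 2^{-\sqrt{r}|R|/2},
\]
as desired. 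The only mildly delicate point is verifying that $n_2$ really is dominated by $\sum_\ell X_\ell$ even though $\mathcal{B}$ and the sets $P(b)$ are themselves random functions of the pointers; this is immediate because every summand of $n_2$ corresponds to some $\ell \in R$ whose pointer independently satisfies the head-hitting event that defines $X_\ell$, so no additional conditioning on $\mathcal{B}$ is required.
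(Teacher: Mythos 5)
Your proof is correct and reaches the same conclusion, but it takes a genuinely different route from the paper. The paper works with the (random) sets $P(b)$ directly: it counts only the $\leq 4\sqrt{r}|R|$ pointers in $\bigcup_{b}P(b)$, notes that each, conditioned on its destination band, hits the head of its segment's chain with probability at most $2/w_j$, and applies Lemma~\ref{chernoff} to this sum. You instead pass to an unconditional dominating sum over all $r|R|$ locations of $R$, which cleanly sidesteps the need to reason about the random sets $\mathcal{B}$ and $P(b)$; both decompositions give $\E \ll \sqrt{r}|R|$ and finish with the same Chernoff--Hoeffding bound.

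Two small points to tighten. First, your justification of independence (``the $X_\ell$ depend only on the independent pointer choices'') is not literally accurate: $X_\ell$ also depends on the random chain permutation that decides which left-half location is the head. The claim is still true, but the clean way to see it is to condition on the chain permutations --- then each $X_\ell$ is determined by the independent pointer choice at $\ell$, so the $X_\ell$ are conditionally independent, and the conditional success probability $q\cdot(80\cdot 2^j)/s$ is a constant (the number of head positions in $W_j$ is deterministic), so they are unconditionally independent with the stated marginals. Second, a harmless arithmetic slip: substituting $|R| \leq s/(\sqrt{r}(\log s)^4)$ once into $40000\sqrt{r}|R|^2\log s / s$ gives $40000|R|/(\log s)^3$, not $40000\sqrt{r}|R|/(\log s)^3$; either expression is far below $\sqrt{r}|R|/(4e)$, so the Chernoff step goes through regardless.
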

\begin{proof}
A pointer falls on head of random pointer chain in 
a segment with probability at most $2/w_j$. Thus,
\begin{align}
\E[n_2] & \leq \left(\frac{2}{w_j}\right) 4 \sqrt{r} |R| \leq \frac{160 |R|}{(\log s)^3} \nonumber.
\end{align}
Again, our claim follows by a routine application of Lemma~\ref{chernoff}.
\end{proof}
\begin{claim} $\displaystyle \Pr[ n_3 \geq \sqrt{r}|R|/2] = 0$.
\end{claim}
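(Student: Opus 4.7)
The plan is to establish this as a deterministic counting statement rather than a probabilistic one. The key observation is that each pointer contributing to $n_3$ lands in a segment in which $\cQ$ has made at least $w_j/8$ probes, and moreover the segments counted by $n_3$ are pairwise distinct. Distinctness follows because $\neg\event_1$ means the pointer is the first (in the fixed total order on $P(b)$) among the pointers in $P(b)$ to enter its segment; within a single band $b$ there is at most one such first-entrant per segment, and across different bands $b,b' \in \mathcal{B}$ the segments are automatically disjoint since they belong to different bands. Hence the collection of segments hit by the $n_3$ pointers is a set of distinct segments.

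Therefore, if $n_3 \geq \sqrt{r}|R|/2$, then $\cQ$ makes at least $(n_3)(w_j/8) \geq \sqrt{r}|R|\, w_j /16$ probes inside $W$. Using $|R| \geq 2^j$ and the definition $w_j = s/(40 \cdot 2^j K)$ with $K \leq \log s$, this total is at least
\[
\frac{\sqrt{r}\cdot 2^j}{16}\cdot\frac{s}{40\cdot 2^j\cdot K} \;=\; \frac{\sqrt{r}\,s}{640\,K} \;\geq\; \frac{\sqrt{r}\,s}{640 \log s}.
\]
This already exceeds the total query budget $\sqrt{r}\,s/(\log s)^4$ guaranteed by Proposition~\ref{prop:deterministic} for all sufficiently large $s$, which is a contradiction. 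Hence $n_3 < \sqrt{r}|R|/2$ holds with certainty, not merely with high probability, giving $\Pr[n_3 \geq \sqrt{r}|R|/2]=0$.

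The only slightly delicate point, and the one I would double-check while writing the proof, is the distinctness claim for segments across different bands of $\mathcal{B}$: since $P(b)$ is defined as a set of source locations whose pointers land in band $b$, pointers from $P(b)$ and $P(b')$ with $b \neq b'$ cannot land in a common segment, so no double-counting occurs in the total probe count. Once this is in place, the whole claim reduces to the arithmetic comparison of $\sqrt{r}\,s/(640\log s)$ against the assumed probe budget $\sqrt{r}\,s/(\log s)^4$, and no randomness is actually invoked.
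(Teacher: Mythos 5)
Your proof is correct and takes essentially the same approach as the paper: both convert the claim into a deterministic counting argument, lower-bounding the total number of probes by $n_3 \cdot w_j/8 \geq \widetilde{\Omega}(\sqrt{r}s)$ and deriving a contradiction with the probe budget $\sqrt{r}s/(\log s)^4$. You are slightly more explicit than the paper in justifying that the segments contributing to $n_3$ are pairwise distinct (via $\neg\event_1$ within a band and disjointness of bands), but this is a detail the paper's proof implicitly relies on as well, so the arguments are substantively identical.
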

\begin{proof}
If $n_3 \geq \sqrt{r}|R|/2$, then the total number of locations read by
$\cQ$ is at least
\begin{align*}
n_3 \frac{w_j}{8} & \geq \left(\frac{\sqrt{r}|R|}{2}\right)\cdot  \frac{w_j}{8} \\
        & \geq \left(\frac{\sqrt{r} 2^{j}}{2}\right)
                \left(\frac{s}{8 \cdot 20 \cdot 2^{j}\log s}\right)  \\
        & \gg  \frac{\sqrt{r}s}{320 \log s}.
\end{align*}
This contradicts our assumption that $\cQ$ makes at most $\sqrt{r}s/
(\log s)^4$ queries.
\end{proof}

\begin{claim}
$\displaystyle \Pr[ n_4 \geq \sqrt{r}|R|/2] \leq 2^{-r |R|/2}.$
\end{claim}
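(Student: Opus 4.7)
The plan is to mirror the structure of the $n_1, n_2$ arguments: first bound the per-pointer probability of the joint event $\event_4 \cap \neg\event_1 \cap \neg\event_2 \cap \neg\event_3$ by a deferred-decisions analysis, then sum to obtain $\E[n_4]$, and finally apply the Chernoff-Hoeffding bound (Lemma~\ref{chernoff}) to derive the stated tail bound $2^{-r|R|/2}$.

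For the per-pointer estimate, I would argue as follows. Fix a pointer emanating from some $v \in P(b)$, $b \in \mathcal{B}$. Using the principle of deferred decisions, reveal the random permutation governing the chain on the left half of any segment only as $\cQ$ probes its locations. Under $\neg\event_3$, $\cQ$ probes fewer than $w_j/8$ locations in the target segment $p$, so at most $w_j/8$ successor edges of the permutation have been exposed when $\ptr_v$ is sampled. Conditioned on the landing segment $p$, the value $\ptr_v$ is uniform over the $w_j/2$ left-half slots of $p$, independent of the exposed edges, and the event $\event_4$ requires $\ptr_v$ to coincide with one of the exposed successor images. Hence $\Pr[\event_4 \mid \text{lands in } p, \neg\event_2, \neg\event_3] \le (w_j/8)/(w_j/2) = 1/4$. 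Averaging over which of the $r$ segments of $b$ the pointer lands in, weighted by the number of probes in each segment (which sum to $T_b$, the total $\cQ$-probes in $b$), refines this to $\Pr[\event_4 \cap \neg\event_1 \cap \neg\event_2 \cap \neg\event_3 \mid \text{lands in } b] \le 2 T_b / (r w_j)$.

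Summing over the $\le 4\sqrt{r}|R|$ pointers in $\bigcup_{b \in \mathcal{B}} P(b)$ and using the probe budget $\sum_b T_b \le \sqrt{r}\,s/(\log s)^4$ together with $w_j = \Theta(s/(2^j \log s))$ and $2^j \le |R|$, I expect to obtain $\E[n_4] = O(|R|/(\log s)^3)$. Because distinct pointers have distinct target segments under $\neg\event_1$ and the random permutations in different segments are mutually independent, the per-pointer Bernoulli indicators are independent. Lemma~\ref{chernoff} with $t = \sqrt{r}|R|/2 \gg 2e\,\E[n_4]$ (valid since $r \ge (\log s)^3$) then gives a tail bound.

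The main obstacle is bridging the gap between the $2^{-\sqrt{r}|R|/2}$ tail that this Chernoff application delivers immediately and the stronger exponent $r|R|/2$ in the statement. With only $\sim \sqrt{r}\,|R|$ per-pointer Bernoullis and expectation $\sim |R|/\polylog(s)$, the threshold $\sqrt{r}\,|R|/2$ sits in the sub-exponential regime, and straight Chernoff, Bernstein, or Bennett bounds cannot improve beyond $2^{-\Omega(\sqrt{r}\,|R|)}$. Reaching $2^{-r|R|/2}$ must therefore come from enlarging the effective event space: either by arguing row-by-row within each band (gaining a factor-$r$ blowup in the number of independent Bernoullis, keyed to the $r$ independent random permutations living in a single band) or by exposing more of each random permutation so that its own exponential concentration contributes an extra $\sqrt{r}$ in the exponent. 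A secondary, minor, obstacle is rigorizing the deferred-decisions step when $\cQ$'s probe locations inside a segment are themselves random functions of earlier reveals; this is handled cleanly by the standard lazy-reveal framework, revealing each permutation edge exactly when the corresponding chain location is first probed.
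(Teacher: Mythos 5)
Your approach differs from the paper's in an important structural way: you index your Bernoulli indicators by the \emph{pointers} emanating from $P(b)$ (at most $4\sqrt{r}|R|$ of them), compute a per-pointer probability that $\event_4$ occurs, and then invoke Lemma~\ref{chernoff} under a claim of independence. The paper instead indexes indicators $\chi_i$ by the \emph{probes} of $\cQ$ (at most $t \leq \sqrt{r}s/(\log s)^4$ of them), where $\chi_i=1$ iff the $i$-th probe hits the predecessor of the first-landing pointer in its segment (and fewer than $w_j/8$ prior probes were made there). It then bounds the conditional probability $\Pr[\chi_i=1\mid \chi_1,\ldots,\chi_{i-1}]\leq 4/w_j$ and applies the Chernoff bound via stochastic domination, with no independence claim needed. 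The paper's ordering sidesteps your trickiest step: since the set of probed locations in a segment is an adaptive function of the random permutations (and indeed of all permutations $\cQ$ has seen so far), your assertion that ``the per-pointer Bernoulli indicators are independent'' because different segments use independent permutations is not correct as stated --- the probe set induces nontrivial cross-segment dependence, and your own ``secondary obstacle'' remark waves at this without actually resolving it. Reformulating your calculation into a probe-by-probe martingale-style bound, as the paper does, is the clean fix.

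Regarding your ``main obstacle'': you should trust your computation. The paper's own proof of this claim concludes with the display $\Pr[Z \geq \sqrt{r}|R|/2] \leq 2^{-\sqrt{r}|R|/2}$, i.e., exponent $\sqrt{r}|R|/2$ and not $r|R|/2$; the $r|R|/2$ in the claim statement (and in the analogous $n_1$ claim) is simply a typo. You can check this is harmless by inspecting the final union-bound computation in the paper, which plugs in $2^{-\sqrt{r}k/2}$, not $2^{-rk/2}$, and closes under the assumption $r \gg (\log s)^2$. So neither a row-by-row blow-up of the event space nor extra exponential concentration from the permutations is needed; the Chernoff-level bound is exactly what the paper uses.

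One smaller remark: your per-pointer estimate $\Pr[\event_4 \mid \text{lands in } p,\ldots] \leq (w_j/8)/(w_j/2)=1/4$ is in the right spirit, but when you then ``refine'' it to $2T_b/(rw_j)$ by weighting over segments, you are implicitly re-deriving exactly the quantity that the paper's probe-indexed sum computes directly. Once you switch to the paper's per-probe formulation, the whole weighting step collapses into the single inequality $\E[Z]\leq (4/w_j)t$, which is cleaner and avoids any need to track $T_b$ separately per band.
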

\begin{proof}
Let us first sketch informally why we do not expect $n_4$ to be
large. Recall that in our random input we place a random pointer chain
in the left half of each segment. Once a pointer has landed at a
location in this segment, its predecessor is equally likely to be any
of the other locations in the segment. So the first probe into that
segment has probability about one in $w_j/2-1$ of landing on the
predecessor, the second probe has probability about one in $w_j/2 -2$
of landing on the predecessor, and so on. Since we assume $\event_3$
does not hold, there are at least $w_j/2-w_j/8-1$ possibilities for
the location of the predecessor. This implies that in order
for $n_4$ to be at least $\sqrt{r}|R|/2$ the query algorithm $\cQ$
must make $\Omega(w_j \sqrt{r}|R|/2)$ queries; but this number exceeds
the number of probes $\cQ$ is permitted.

In order to formalize this intuition, fix (condition on) a choice of
pointers from $V$. Let us assume that the algorithm makes $t$
probes. For $i =1,2,\ldots,t$, define indicator random variables
$\chi_i$ as follows: $\chi_i =1$ iff the following conditions hold.
\begin{itemize}
\item Suppose the $i$-th probe is made to a segment $p$ in band $b \in
  \mathcal{B}$ .  Let $\ell$ be the location where the first pointer
  (among the pointers from $P(b)$ to $p$) lands. Then, the $i$-th
  probe of $\cQ$ is made to the predecessor of $\ell$ in the
  random pointer chain in $b$.
\item Fewer than $w_j/8$ of the previous probes were made to this segment.
\end{itemize}
Observe that if more than one pointer land on $p$, then except
for the first amongst them (according to the ordering on the locations
in $P(b)$), event $\event_2$ does not hold for the remaining pointers,
and hence by definition  event $\event_4$ does not hold either.

Define $Z=\sum_{i=1}^t \chi_i.$. Note that $Z$ is an upper bound on $n_4$, and we wish to estimate the probability that $Z\geq \sqrt{r}|R|/2$.
The key observation is that for every choice $\sigma$ of
$\chi_1,\chi_2,\ldots, \chi_{i-1}$, we have
\begin{equation}\label{dom}
\Pr[\chi_i = 1 \mid
  \chi_1,\chi_2,\ldots, \chi_{i-1}= \sigma] \leq \frac{1}{3w_j/8-1} \leq \frac{4}{w_j}.
\end{equation}
Thus, 
\begin{align}
\E[Z] &\leq \left(\frac{4}{w_j}\right)t  \leq \left(\frac{4}{w_j}\right) (\log s)^{-4} \sqrt{r} s \leq (\log s)^{-2} \sqrt{r} |R| \nonumber.
\end{align}
The variables $\chi_i$ are not independent, but it follows from~(\ref{dom}) that Lemma~\ref{chernoff}
is still applicable in this setting. We conclude that
\begin{equation}
\Pr[Z \geq \sqrt{r}|R|/2]  \leq 2^{- \sqrt{r} |R|/2}. \nonumber
\end{equation}
Since, the above bound holds for each choice of pointers from $V$, it
holds in general.
\end{proof}

Finally, to establish the required expansion for all sets $R$, 
we use the union bound over all $R$. The probability that some set
$R$ has fewer than $|R|$ neighbors is at most
\begin{align*}
&4 \sum_{k=1}^{s/(\sqrt{r} (\log s)^4)} {s/2 \choose k} 2^{-\sqrt{r} k/2} \\
& \leq \sum_{k\geq 1} s^k 2^{- \sqrt{r} k /2}\\
& \leq \sum_{k \geq 1} s^{-k} = o(1),
\end{align*}
where we used our assumption that $r \gg (\log s)^2$.
This completes the proof of the matching lemma.

%
%
%
%
%

\paragraph{Acknowledgment:} We thank Sagnik Mukhopadhyay for useful discussions.
\bibliography{ref}

\end{document}